\RequirePackage{xr-hyper}
\documentclass[sigplan,10pt]{acmart}
\def\paperwrapper{1}
\def\fullpaper{1}
\ifdefined\paperwrapper\else
\RequirePackage{xr-hyper}
\documentclass[sigplan,10pt]{acmart}
\fi
\renewcommand\footnotetextcopyrightpermission[1]{}
\usepackage{amsmath,amssymb,mathtools}
\usepackage{booktabs}
\usepackage{multirow}
\usepackage{siunitx}
\usepackage{algorithm}
\usepackage{algpseudocode}
\usepackage{cryptocode}
\usepackage{pgfplots}
\usepackage{xspace}
\usepackage[leftcaption]{sidecap}
\pgfplotsset{compat=1.18}
\usepackage{enumitem}
\usepackage{pifont} 

\usepackage{tikz}
\usetikzlibrary{positioning, fit, backgrounds, arrows.meta, calc}

\usepackage[most]{tcolorbox}
\definecolor{tkshade}{rgb}{0.995,0.975,0.90}
\definecolor{tkgold}{rgb}{0.80,0.60,0.10}
\definecolor{tkbrown}{rgb}{0.42,0.30,0.04}
\newtcolorbox{takeaway}[1][Takeaway]{%
  enhanced, breakable=false,
  colback=tkshade, colframe=tkgold!80!tkbrown,
  coltitle=tkbrown, colbacktitle=tkgold!20!white,
  fonttitle=\bfseries\small, fontupper=\small\itshape,
  sharp corners=west, rounded corners=east,
  boxrule=0.8pt, borderline west={2pt}{0pt}{tkbrown},
  left=2mm, right=2mm, top=0.6mm, bottom=0.6mm, boxsep=1mm,
  before skip=4pt, after skip=4pt,
  title={#1},
}

\newif\ifcomment
\commenttrue

\ifcomment
    \newcounter{YXNumberOfComments}
    \stepcounter{YXNumberOfComments}
    \newcommand{\xym}[1]{\textcolor{orange}{\small \bf [XYM\#\arabic{YXNumberOfComments}\stepcounter{YXNumberOfComments}: #1]}}

    \newcounter{PHNumberOfComments}
    \stepcounter{PHNumberOfComments}
    \newcommand{\hpc}[1]{\textcolor{blue}{\small \bf [HPC\#\arabic{PHNumberOfComments}\stepcounter{PHNumberOfComments}: #1]}}  
    
    \newcommand{\del}[1]{{{\color{red}\st{#1}}}}
\else
    \newcommand\xym[1]{}    
    \newcommand\hpc[1]{} 
    \newcommand{\del}[1]{}
\fi

\newcommand{\share}[1]{\langle #1\rangle}
\newcommand{\bits}{\{0,1\}}
\newcommand{\Z}{\mathbb{Z}}
\newcommand{\sgn}{\operatorname{sign}}
\newcommand{\popcount}{\mathsf{WallacePopcount}}
\newcommand{\lepub}{\mathsf{LE}_{\mathrm{pub}}}
\newcommand{\sys}{\textsc{Spruce}\xspace}

\newif\ifincludeappendix
\ifdefined\appendixonly\includeappendixtrue\fi
\ifdefined\fullpaper\includeappendixtrue\fi

\ifdefined\fullpaper
\newcommand{\suppref}[1]{\ref{#1}}
\newcommand{\paperref}[1]{\ref{#1}}
\else
\newcommand{\suppref}[1]{\ref{supp-#1}}
\newcommand{\paperref}[1]{\ref{paper-#1}}
\fi

\begin{document}

\ifdefined\appendixonly
\externaldocument[paper-][nocite]{full}[main.pdf]
\title[Supplementary Appendix: \sys]{Supplementary Appendix for\\ \sys: Representation-Driven Scaling for Private Outsourced Retrieval}
\author{Submission \#142}
\maketitle
\thispagestyle{plain}
\pagestyle{plain}
\else
\ifdefined\fullpaper\else
\externaldocument[supp-][nocite]{full}[appendix.pdf]
\fi
\title{\sys: Scalable Private Outsourced Retrieval Using Compact Embeddings}

\author{Peichun Hua\textsuperscript{1,2},
Yunming Xiao\textsuperscript{1,2} \\
\textsuperscript{1}The Chinese University of Hong Kong, Shenzhen
\\
\textsuperscript{2}State Key Laboratory of Internet Architecture, Tsinghua University
\\
\texttt{peichunhua@link.cuhk.edu.cn}, \texttt{yunmingxiao@cuhk.edu.cn}
}



\begin{abstract}
Retrieval-Augmented Generation (RAG) has made dense retrieval over large document collections a default building block, and organizations increasingly outsource the vector index to untrusted clouds—exposing both proprietary corpora and user queries. Protecting this retrieval cryptographically is difficult because every query searches corpus-scale state: computation, correlated randomness, and communication grow with every indexed document. At million-document scale, a naive secure implementation takes minutes and $\sim$90\,GB of communication per query. Recent optimized systems take 10--22 seconds per query, but this is still comparable to downstream LLM generation and therefore a first-order contributor to end-to-end latency.

We propose \sys 
(\underline{S}calable \underline{P}rivate Outsourced \underline{R}etrieval \underline{U}sing \underline{C}ompact \underline{E}mbeddings)
, which co-designs the representation with the cryptographic protocol.
\sys learns compact binary hash codes that preserve the candidates needed for subsequent full-precision reranking, replacing corpus-wide full-embedding scoring with efficient Hamming-distance computation under two-server multi-party computation (MPC). 
A corpus-calibrated fixed-radius protocol avoids multi-round candidate selection while preserving the final retrieval quality.
\sys further supports two optimizations that target deployment bottlenecks: private cluster pruning trades minor quality loss for drastically less computation, while a one-core owner-operated dealer removes cloud OT as a preprocessing bottleneck.
Across four corpora spanning 383K–5.42M documents, \sys retains the original search quality with median candidate sets of merely 382–1,952. At 10\,Gbps inter-server bandwidth, its full scan takes 0.21–2.97 seconds, $4.8$–$6.7\times$ faster than the closest measured prior work, while private pruning takes 0.06–1.09 seconds, $13.1$–$22.9\times$ faster, and retains $93.9$–$97.3\%$ of full-float NDCG. On the largest corpus, pruning and the dealer jointly raise sustained throughput by $31.5\times$ at 1\,Gbps per-link bandwidth.
\end{abstract}

\begin{CCSXML}
<ccs2012>
   <concept>
       <concept_id>10002951.10003317.10003338.10003343</concept_id>
       <concept_desc>Information systems~Learning to rank</concept_desc>
       <concept_significance>500</concept_significance>
       </concept>
   <concept>
       <concept_id>10002951.10003227.10010926</concept_id>
       <concept_desc>Information systems~Computing platforms</concept_desc>
       <concept_significance>300</concept_significance>
       </concept>
   <concept>
       <concept_id>10002951.10003317.10003359.10003363</concept_id>
       <concept_desc>Information systems~Retrieval efficiency</concept_desc>
       <concept_significance>300</concept_significance>
       </concept>
   <concept>
       <concept_id>10002978.10002991.10002995</concept_id>
       <concept_desc>Security and privacy~Privacy-preserving protocols</concept_desc>
       <concept_significance>500</concept_significance>
       </concept>
   <concept>
       <concept_id>10002978.10003014.10003015</concept_id>
       <concept_desc>Security and privacy~Security protocols</concept_desc>
       <concept_significance>300</concept_significance>
       </concept>
 </ccs2012>
\end{CCSXML}

\ccsdesc[500]{Information systems~Learning to rank}
\ccsdesc[300]{Information systems~Computing platforms}
\ccsdesc[300]{Information systems~Retrieval efficiency}
\ccsdesc[500]{Security and privacy~Privacy-preserving protocols}
\ccsdesc[300]{Security and privacy~Security protocols}

\keywords{Secure multi-party computation, private information retrieval, dense retrieval, retrieval-augmented generation, deep hashing, oblivious search, secret sharing, data indexing.}

\maketitle
\thispagestyle{plain}
\pagestyle{plain}

\section{Introduction}

Retrieval-Augmented Generation (RAG) has turned dense information retrieval over large corpora into core infrastructure for knowledge-grounded language applications~\cite{lewis2020rag,gao2023retrieval,fan2024survey}. A RAG system answers a query by first \emph{retrieving}: it embeds the query with a neural encoder, scores that embedding against a precomputed index of document embeddings, and returns the top-$k$ documents, which are concatenated into the prompt of a generative model~\cite{lewis2020rag,gao2023retrieval}. As the corpus grows, the cost of retrieval grows with it, motivating more organizations to increasingly outsource retrieval to managed cloud services~\cite{chang2025remoterag,ming2026p2rag}. 

This is convenient, but from a confidentiality standpoint, it is alarming: the cloud now holds both a proprietary corpus—often the asset that the organization wants to protect the most—and a stream of user queries that reveal private intent \cite{huang2023privacy,zeng2024good}. Prior embedding inversion attacks and attribute inference attacks have shown that dense vectors are not opaque identifiers but are invertible semantic footprints \cite{morris2023text,li2023geia,song2020information,chen2025algen}, making both document indexes and query representations privacy-sensitive. 
At the system level, privacy protection in RAG spans two distinct components: neural model computation and outsourced retrieval state. Existing private-inference systems primarily address the former by securing transformer forward passes~\cite{hao2022iron,pang2024bolt,lu2023bumblebee,xu2025blb}, whose cost is governed by model size rather than corpus size. 

This paper targets the latter. 
We assume that the client runs the query encoder locally, while the cloud stores the document embeddings and contents 
and performs retrieval for multiple users in an organization. At the scale of $10^7$ documents, this state can occupy tens of gigabytes or more,  and searching it can require substantially more computation than encoding a query.

The challenge is to search this large outsourced state without exposing either the corpus or the encoded query, while keeping the cost practical at corpus scale. Since downstream LLM generation in RAG already takes seconds,
a practical private service should keep retrieval within this seconds-scale latency budget while 
sustaining multi-user throughput. 

Among available approaches, secret-sharing-based multi-party computation (MPC) is particularly attractive because it offers more efficient similarity computation than homomorphic encryption (HE) while jointly protecting the corpus and query; however, a full scan still incurs corpus-linear computation, correlated randomness, and communication (\S\ref{sec:mpc-bg}). 
Table~\ref{tab:motivation} shows the challenge: even secure int8 cosine over $N$ embeddings of dimension $D{=}768$ requires $N\!\cdot\!D$ secure multiplications, taking $\sim\!9.4$ minutes and $\sim\!88$\,GB per query at one million documents.

Recent privacy-preserving RAG systems approach this cost at different points. $p^2$RAG~\cite{ming2026p2rag} is closest to our setting: like us, it secret-shares the outsourced embeddings across two non-colluding servers and optimizes candidate selection through interactive bisection; however, it still securely scores every full embedding.
Other systems assume that the retrieval provider owns or is trusted with the corpus and therefore focus on query privacy: RemoteRAG leverages differential privacy (DP) to perturb the query before PHE scoring, PANTHER combines clustering with PIR and MPC, and Pisces applies SimHash and BM25 filters before secure scoring~\cite{chang2025remoterag,li2025panther,liang2026pisces}.  Despite these optimizations, our evaluation shows that their encrypted reranking, PIR state, or candidate pools remain costly at a million-document scale.

\begin{table}[t]
\centering
\caption{Per-query cost of the \emph{search} step of one private query ($D{=}768$): the int8 cosine baseline (DFP, direct full-precision: the uncompressed $D$-dim embedding scored in int8) vs.\ our learned $L{=}128$ hash, across three corpus sizes.}
\label{tab:motivation}
\small
\setlength{\tabcolsep}{4pt}
\resizebox{\columnwidth}{!}{
\begin{tabular}{l l S[table-format=5.3] S[table-format=5.2] S[table-format=6.1]}
\toprule
Representation & per-query op &
\multicolumn{1}{c}{$N{=}382K$} &
\multicolumn{1}{c}{$N{=}10^6$} &
\multicolumn{1}{c}{$N{=}10^7$} \\
\midrule
int8 cosine (DFP) & mults ($\times10^8$)        & 2.9   & 7.7    & 76.8 \\
\quad online comm (MB)   &                       & 33611 & 87863  & 878628 \\
\quad online latency (s) &                       & 215.0 & 562.1  & 5621 \\
\addlinespace
hash $L{=}128$ (ours) & ANDs ($\times10^8$)      & 0.5   & 1.3    & 12.8 \\
\quad online comm (MB)   &                       & 32.9  & 86.0   & 860 \\
\quad online latency (s) &                       & 0.183 & 0.482  & 4.85 \\
\bottomrule
\end{tabular}%
}
\end{table}


\paragraph{Our approach: \sys.} 
Our key insight is to co-design the retrieval representation with the secure protocol, so that corpus-wide secure computation operates only on compact learned binary codes, while full-precision scoring is restricted to a small candidate set. 
\sys therefore separates private retrieval into a corpus-wide secure filter and a candidate-only exact stage. At setup, the corpus is encoded once into compact codes and full-precision embeddings, then split across two non-colluding servers. At query time, the servers reveal a coarse candidate set from the compact codes; the trusted client reranks the corresponding embeddings and fetches the final documents obliviously.

\paragraph{Coarse filtering on binary representations.} The \textsf{Filter} scores learned $L$-bit codes by Hamming distance—the number of differing bits, computed by a bitwise XOR followed by a popcount~\cite{wang2014hashing,luo2023survey}. This maps efficiently to two-server MPC: XOR-shared bits are combined locally without interaction, so only the popcount and one distance-to-radius comparison consume interactive AND gates. The filter thus replaces the $N\!\cdot\!D$ secure multiplications of full-embedding cosine with roughly $L\!\cdot\!N$ Boolean ANDs (\S\ref{sec:protocol}), making $L$ a direct gate budget. Our lightweight deep-hashing recipe (\S\ref{sec:why-binary},~\S\ref{sec:hash-arch}) learns compact codes that beat the $768$-bit sign-of-float baseline~\cite{reimers2019sentence} at a fraction of its gate count.

\paragraph{A protocol with calibrated radius to tame interaction.} The \textsf{Filter} must turn shared Hamming distances into a candidate set. Exact top-$K$ selection would need an oblivious sort over the shares~\cite{hamada2012sorting,bogdanov2014sorting}, which is expensive. A radius threshold is much cheaper, but choosing the radius online from the encrypted distance distribution with binary search costs $\log_2(L{+}1)$ data-dependent count reveals and additionally leaks a sketch of the corpus distance CDF. \sys moves this choice offline: repeated stratified calibration finds the smallest radius that reaches a target final NDCG after float reranking, and the resulting corpus-level radius is supplied up front, collapsing online candidate selection to a \emph{single} comparison and a \emph{single} reveal (\S\ref{sec:protocol}) while directly optimizing the retrieval metric the application consumes.

\paragraph{Additional deployment optimizations.} To further facilitate practical deployments, \sys additionally supports two optimizations. First, users can privately retrieve a fixed set of padded Hamming clusters before MPC and trade a slight quality loss for significantly lower work. Second, organizations that can run one small trusted server can seed triple generation and eliminate a critical communication bottleneck among the two servers. The two optimizations compose directly, and can be disabled independently (\S\ref{sec:opts}).

\paragraph{Oblivious fetch and a bounded leakage profile.} Once the calibrated \textsf{Filter} reveals a small but coarse candidate set, the client reconstructs only the candidate embeddings, reranks them against its full-precision query embedding, and keeps the top-$k$. Document content is stored as client-encrypted ciphertext replicated to both servers, and the client retrieves its chosen blobs by two-server PIR~\cite{chor1998private}, so the servers never learn which documents are returned (\S\ref{sec:fetch}). We give an adaptive multi-query simulation proof for explicit setup and online leakage functions and quantify the full-scan access pattern with a ciphertext-only co-occurrence estimator (\S\ref{sec:leakage}). The experiment recovers a low but measurable unlabeled neighborhood signal.

\paragraph{Results.} Across four BEIR corpora spanning 383K--5.42M documents, \sys retains $95.2$--$97.8\%$ of full-corpus float NDCG with median candidate set size of 382–1,952. At 10\,Gbps inter-server bandwidth, its full scan runs $4.8$--$6.7\times$ faster than the closest measured prior path~\cite{ming2026p2rag,chang2025remoterag,liang2026pisces}; private pruning raises this advantage to $13.1$--$22.9\times$ while retaining $93.9$--$97.3\%$ of full-float NDCG. On the largest corpus, pruning and the dealer jointly raise sustained throughput by $31.5\times$ when inter-party links are capped at 1\,Gbps.

\section{Background and Threat Model}
\label{sec:background}

\subsection{Dense Retrieval}
A dual-encoder retriever maps a query and each document into a shared $D$-dimensional space and ranks them by inner product or cosine similarity~\cite{karpukhin2020dpr,wang2022e5,chen2024bge}. Modern retrievers fine-tune a pretrained transformer with a contrastive objective and hard-negative mining, and generalize zero-shot across domains; this is why a single encoder serves heterogeneous corpora~\cite{reimers2019sentence,thakur2021beir}. At serving time, the document embeddings form a static index of size $N{\times}D$; a query is encoded once and scored against the entire index, and the top-$k$ results are returned and, in RAG, concatenated into the generator's prompt~\cite{lewis2020rag,wang2024biorag,ram2023context}. Exact scan is $O(ND)$ per query; production systems, instead, build an approximate-nearest-neighbor index such as HNSW, IVF, and Product Quantization~\cite{malkov2018efficient,jegou2011ivfpq,douze2025faiss} to sublinearize the search, at the cost of higher storage overhead or loss of search quality. 

\paragraph{Retrieval metrics.} Recall@$k$ measures the fraction of a query's gold-relevant documents returned in the top $k$, while NDCG@$k$ weights graded relevance by rank and normalizes against the ideal ranking. Because our filter emits a candidate set rather than a final ranking, we report candidate recall $R_{\mathrm{float}@10}$: the fraction of the full-corpus float top 10 present anywhere in that set. $R_{\mathrm{float}@10}$ measures fidelity to the float retriever, whereas final NDCG@10 measures the quality of the float-reranked output against gold relevance labels.

\paragraph{Challenges in private search} Unfortunately, none of the index types above survive transplantation into a cryptographic backend unchanged. For example, the data-dependent traversal in HNSW is exactly the access pattern that an oblivious protocol must hide~\cite{zhu2025compass,cui2026mess}. When the index is outsourced, the cloud sees the entire $N{\times}D$ matrix of sensitive embeddings and every query embedding, which inversion attacks may turn back into text~\cite{morris2023text,li2023geia}.
Both the cost of this scan and the exposure of the index hinge on the representation that the cloud computes. One representation, long used for plaintext efficiency, is the short binary \emph{hash code}, where similarity becomes a Hamming distance, which consists of only a bitwise XOR and a popcount~\cite{do2016bdnn,wang2014hashing,luo2023survey}.
Two of its properties turn out to matter once the computation moves under cryptography: the Hamming distance is among the cheapest operations to evaluate securely (\S\ref{sec:mpc-bg}), and the code length $L$ is a free knob, decoupled from the encoder dimension $D$, so the per-document work can be set independently of the model. \S\ref{sec:hash-arch} details how such codes are learned.

\subsection{Cryptographic Preliminaries}
\label{sec:mpc-bg}
Our backend composes two standard cryptographic building blocks—two-server secure computation and PIR—both in the same non-colluding, semi-honest model.

\paragraph{Two-server secure computation.} We work in the standard two-server (2PC) setting: two servers $P_A$ and $P_B$ execute the protocol honestly but are \emph{curious}—each may inspect its own transcript to infer what it can—and do not collude~\cite{goldreich1987gmw}. Privacy comes from \emph{secret sharing}: every sensitive value is split into two shares, one held by each server, that reconstruct the plaintext while either share \emph{alone} is uniformly random and reveals nothing. A bit $x\in\bits$ is \emph{XOR-shared} as $x=\share{x}_A\oplus\share{x}_B$; to produce the sharing, one samples a uniform bit $r$ and sends $\share{x}_A{=}r$ to $P_A$ and $\share{x}_B{=}r\oplus x$ to $P_B$, so neither share on its own says anything about $x$. 

The cost of a computation is then set by which gates it uses. \emph{Linear} gates (including XOR and NOT) are \emph{free}: to XOR two shared bits, each server XORs its own pair of shares locally, and the local results already recombine to the correct answer, with no communication and no setup. The \emph{nonlinear} gate—AND—cannot be evaluated from local shares and is the expensive primitive. It is computed with a \emph{Beaver triple}~\cite{beaver1991circuit}: a random pre-shared triple $(\share{a},\share{b},\share{c})$ satisfying $c{=}a\wedge b$, generated in an input-independent offline phase and then consumed online to reduce the AND to a few local XORs plus a single round of masked communication. Each AND thus spends one triple, and ANDs at the same circuit depth batch into one communication round. The clouds generate triples before queries through oblivious transfer (OT), where a receiver obtains one of two sender messages without exposing its choice or the other message. Silent OT derives large batches from base OTs with mostly local expansion~\cite{boyle2019silentot,libOTe}; our implementation converts two random OTs into each Boolean triple. Buffered generation is absent from single-query latency, but its supply rate bounds sustained throughput. \S\ref{sec:opts} reduces triple demand or replaces its source with a seeded pseudorandom-generator (PRG) dealer.

The same free-linear / charged-nonlinear split holds in the \emph{arithmetic} domain over $\Z_{2^{32}}$ used for inner products: values are additively shared, additions are local, and each multiplication consumes one arithmetic triple. But the two retrieval metrics load these primitives very differently. A cosine similarity is a sum of products, so its bulk work is the $N\!\cdot\!D$ per-dimension \emph{multiplications}, every one a charged triple; a Hamming distance is an XOR followed by a popcount, so its bulk work—per-bit \emph{XOR} of the query against every hash code—is \emph{free}, with only the popcount spending ANDs.

\begin{figure*}[thbp]
\centering
\includegraphics[width=0.9\textwidth]{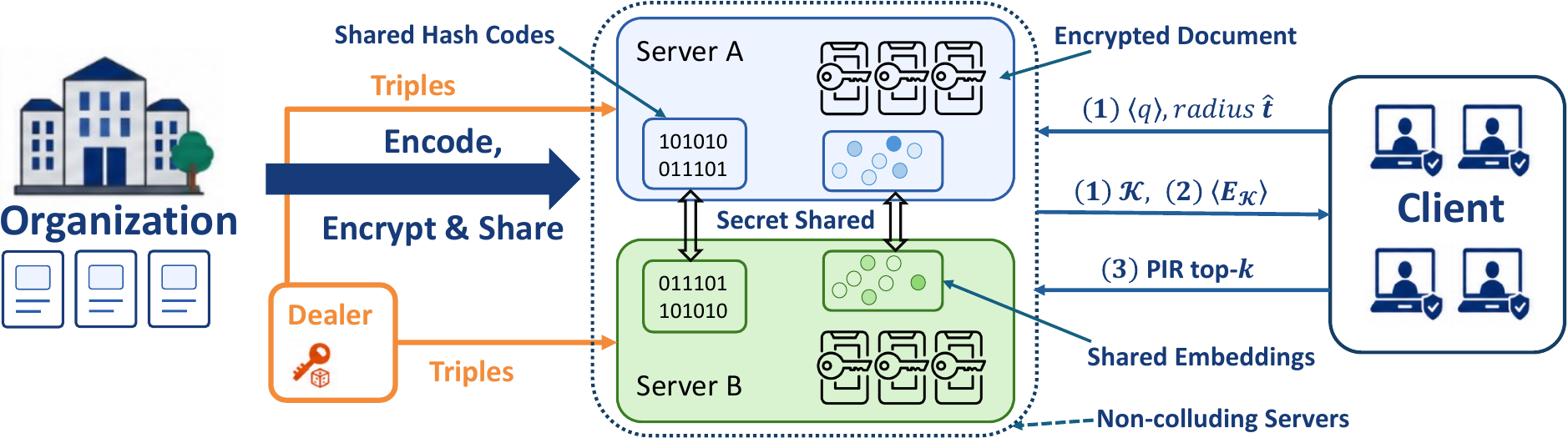}
\caption{\sys overview. During setup, the owner shares codes and embeddings, replicates encrypted content, and optionally supplies triples through a dealer. Online arrows denote (1) Boolean-MPC \textsf{Filter} input/output, (2) candidate-embedding shares for client-side \textsf{Rerank}, and (3) PIR \textsf{Fetch}; private pruning optionally restricts (1) to padded buckets.}
\Description{Architecture diagram with a trusted organization on the left, two non-colluding cloud servers in the center, and a trusted client on the right. The owner supplies the encrypted and shared index, an optional dealer supplies triples, and the query follows Filter, Rerank, and Fetch stages.}
\label{fig:overview}
\end{figure*}

\paragraph{Private information retrieval.} Private information retrieval (PIR)~\cite{chor1998private} lets a client fetch record $i$ from an $n$-record database without revealing $i$~\cite{ostrovsky2007survey,ulukus2022private}. We use the classical information-theoretic two-server construction~\cite{chor1998private}: two non-colluding servers each hold an identical replica of the $n$-record database, and the client splits a one-hot selector for $i$ into two XOR shares. Each server returns the XOR of the records selected by its share; either the client XORs the responses to recover record $i$, or the responses remain XOR shares for a later Boolean MPC. Each server sees a uniformly random selector and learns nothing about $i$. Private pruning uses the shared-output path to load buckets without revealing their IDs to either cloud (\S\ref{sec:private-prune}); final content fetch reconstructs the selected ciphertext at the client (\S\ref{sec:fetch}).

\subsection{Threat Model}
\label{sec:threat}
\sys involves three roles. A \emph{data owner} holds the corpus and, in a one-time offline setup, encodes and indexes it before handing the index to the servers. Two \emph{servers} $P_A$ and $P_B$ jointly store the outsourced index and answer queries. A \emph{client} issues queries and is trusted; it holds the plaintext query and receives the final retrieved documents. 
The two servers are \emph{semi-honest and non-colluding}: each follows the protocol but may inspect its own view to infer what it can, and the two do not share their views. This is the standard model for two-server PIR and a realistic one for commercial two-cloud deployments, where the providers are competitors with no incentive to collude.

The security goal is to keep both the corpus and the user's queries confidential from the servers. An optional trusted owner-operated dealer may supply query-independent correlated randomness (\S\ref{sec:opts}) \textit{without} storing any corpus or index and is not required for security or correctness.

We specify the leakage that \sys does permit: the servers learn a coarse access pattern over the corpus, which we capture with simulation-based leakage functions and evaluate through relational access pattern inference in \S\ref{sec:leakage}. We exclude malicious (actively deviating) or colluding servers, and we do not rely on additional hardware trust such as a server-side TEE.


\section{System Design}
\label{sec:design}

\subsection{System Overview}
\label{sec:overview}

Figure~\ref{fig:overview} shows a high-level demonstration of our architecture. \sys consists of a one-time offline setup run by the data owner, and an online query path with three stages: \textsf{Filter}, \textsf{Rerank}, and \textsf{Fetch}.

\paragraph{Offline setup.} The data owner encodes the corpus once, producing for each document an $L$-bit hash code, a full-precision embedding. The codes $H\in\bits^{N\times L}$ and the embeddings are XOR secret-shared across the two servers, so neither server alone holds a single code bit or embedding byte; the content is encrypted under a client-held key, and the \emph{identical} ciphertext is replicated to both servers. The servers thus store a fully hidden index together with a pair of identical encrypted ciphertext databases, after which the data owner may go offline; an optional dealer that remains online holds only PRG seeds (\S\ref{sec:opts}).

\paragraph{Online query.} The client encodes its query locally, secret-shares the query \emph{code} to the servers, and the three stages run in sequence. \emph{(1)~\textsf{Filter}}: the servers evaluate the Hamming distance between the shared query code and every shared corpus code under two-server Boolean MPC—a communication-free XOR followed by a small popcount—and select the candidate set $\mathcal{K}$ with a single client-supplied radius threshold, revealing only $\mathcal{K}$ (\S\ref{sec:protocol}). \emph{(2)~\textsf{Rerank}}: each server returns its shares of $|\mathcal{K}|$ candidate embeddings, and the client reconstructs them, scores them against the full-precision query it never released, and keeps the top-$k$—all locally, in plaintext (\S\ref{sec:fetch}). \emph{(3)~\textsf{Fetch}}: the client retrieves the $k$ chosen content blobs by two-server PIR over the replicated ciphertext and decrypts them under its key, so neither server learns which documents were returned (\S\ref{sec:fetch}).

\paragraph{Roadmap.} The rest of the paper develops the learned binary representation (\S\ref{sec:deep-hash}), Boolean Hamming \textsf{Filter} and its cost (\S\ref{sec:protocol}), and client-side \textsf{Rerank} and oblivious \textsf{Fetch} (\S\ref{sec:fetch}). \S\ref{sec:opts} adds two compatible deployment optimizations, and \S\ref{sec:leakage} analyzes leakage.

\subsection{Learned Binary Representation}
\label{sec:deep-hash}

\subsubsection{Towards Binary Representation}
\label{sec:why-binary}

The cost of secure retrieval is first determined by how a document is represented because the representation dictates which secure primitive runs $N$ times per query (except for optimizations we discussed in \S\ref{sec:opts}).

\paragraph{Float and int8 cosine are equally stuck.} A secure cosine over $D$-dimensional vectors is $N\cdot D$ secure multiplications plus a top-$k$ selection. Full precision computation additionally requires fixed-point encoding and truncation.
8-bit quantization available in many libraries~\cite{reimers2019sentence} avoids truncation, but the multiplication count remains unchanged. We implement the latter as a direct full-precision (DFP) baseline (\S\ref{sec:protocol}, \S\ref{sec:eval}): the data owner additive-shares the signed-int8 corpus embeddings over $\Z_{2^{32}}$ during offline setup, the client additive-shares its query in the same domain, and the servers run $N\cdot D$ Beaver multiplications followed by a shared top-$k$ scan. At $N{=}10^6$ this entails $\sim\!7.7\times10^8$ secure multiplications, $\sim\!9.4$ minutes of online wall-clock, and $\sim\!88$\,GB of online traffic per query (Table~\ref{tab:motivation}).

\paragraph{Binary output quantization wastes bits.} The Sentence-transformer library \cite{reimers2019sentence} can binarize an embedding by naively signing each float dimension, which, for a $768$-dimensional encoder, yields a $768$-bit code. This already replaces multiplications with a cheaper Hamming popcount, but it pins the bit budget to the encoder's dimension. The popcount's secure cost is linear in $L$.
We measure the effect of code length directly in Figure~\ref{fig:cost-vs-L}: at $N{=}4096$, the median online wall-clock rises from $6.3$\,ms at $L{=}128$ to $24.1$\,ms at $L{=}768$ ($3.8\times$), the AND-triple count rises $5.6\times$ ($0.57$\,M $\to$ $3.19$\,M), and the online bytes rise $4.7\times$ ($0.35$\,MB $\to$ $1.66$\,MB).
Worse, the dimension-pinned code is not even accurate for its size: signing raw float dimensions is not optimized for the Hamming metric, so it also loses retrieval quality relative to a code learned for that metric, as we show next.
\paragraph{Shorter code with learning.} A deep hash head is a \emph{learned} projection from the encoder's pooled representation to $L$ logits, trained so that Hamming space preserves the top candidates among the original retriever's ranking (\S\ref{sec:hash-arch}). $L$ is a flexible design knob, decoupled from $D$, so we pick the shortest code that preserves sufficient quality. We also observe that a learned projection can concentrate ranking-relevant structure into fewer bits than taking the sign of raw float dimensions: our $128$-bit code outperforms the $768$-bit sentence-transformer binary baseline on three of the four evaluated corpora, with gains of $10.0$--$62.2\%$ (Table~\ref{tab:stbin}). Under XOR sharing, the entire cost is dominated by the popcount over $L$ bits ($\approx\!LN$ ANDs), which the short code minimizes directly.

\begin{figure}[tbp]
\centering
\includegraphics[width=0.78\columnwidth]{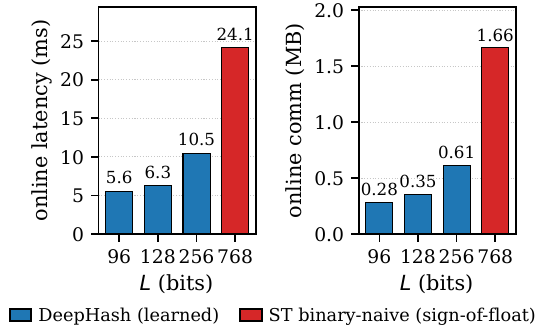}
\vspace{-0.1in}
\caption{Online cost vs.\ code length $L$ at $N{=}4096$ (MP-SPDZ \texttt{semi-bin-party.x}). Both panels grow roughly linearly with $L$, so the $L{=}768$ baseline pays $3.8\times$ the online latency and $4.7\times$ the online traffic of our $L{=}128$ code.}
\label{fig:cost-vs-L}
\vspace{-0.1in}
\end{figure}

\begin{table}[tbp]
\centering
\caption{Hash-only retrieval quality on four BEIR corpora: sentence-transformer $768$-bit binary-naive ($\approx\!768N$ ANDs/query) vs.\ our learned $128$-bit hash.}
\label{tab:stbin}
\vspace{-0.1in}
\small
\setlength{\tabcolsep}{4pt}
\begin{tabular}{l S[table-format=1.4] S[table-format=1.4] r}
\toprule
Dataset & {ST binary ($L{=}768$)} & {hash ($L{=}128$)} & {$\Delta$ (\%)} \\
\midrule
NQ             & 0.3665 & 0.2654 & $-27.6\%$ \\
DBpedia        & 0.2099 & 0.2310 & $+10.0\%$ \\
Climate-FEVER  & 0.0702 & 0.1139 & $+62.2\%$ \\
Webis-Touch\'e & 0.1640 & 0.1768 & $+7.8\%$ \\
\bottomrule
\end{tabular}
\vspace{-0.1in}
\end{table}

\paragraph{Lightweight training recipe.} Producing this code is itself deliberately cheap. We adapt the encoder with LoRA rather than full fine-tuning (\S\ref{sec:hash-arch}), which keeps the training footprint under $48$\,GiB of GPUs memory (two RTX~4090 GPUs) and the wall-clock time under $12$ hours, keeping \sys within reach of clients and model providers with limited hardware. The encoder and head (\S\ref{sec:hash-arch}) and the training objective (\S\ref{sec:objective}) follow; the full training configuration, schedules, and the design alternatives we explored are deferred to Appendix~\suppref{app:hash-training}.
 
\subsubsection{Encoder and Hash Head}
\label{sec:hash-arch}
The dense retriever, or \emph{encoder} $e:\mathcal{X}\to\mathbb{R}^{D}$, maps text (a query/document) to a $D$-dimensional unit vector, with relevance scored by the inner product. Any modern dual-encoder fits this interface; we instantiate $e$ with a pretrained transformer (\texttt{e5-base-v2}, $D{=}768$~\cite{wang2022e5}), so that the float geometry we start from is already strong zero-shot, and we do not attempt to train a retriever from scratch.
 
Our deep hash model is a thin layer placed \emph{on top of} this encoder. A linear hash head $g:\mathbb{R}^{D}\to\mathbb{R}^{L}$ reads the encoder's pooled output and emits $L$ real-valued logits; the binary code is their sign, $b=\sgn(g(e(\cdot)))\in\bits^{L}$. Crucially, the head adds a second, parallel output rather than replacing the first: a single forward pass can produce both the continuous embedding $e(\cdot)$, which the trusted client keeps for the full-precision rerank (\S\ref{sec:protocol}), \textit{and} the $L$-bit code $b(\cdot)$, which is the only object the servers ever compute on (the \textsf{Filter} henceforth). The code length $L$ is a free knob, decoupled from $D$ (\S\ref{sec:why-binary}) and sets the per-document gate count of the \textsf{Filter}. To reshape the encoder for Hamming retrieval without disturbing its zero-shot structure, we adapt it with LoRA~\cite{hu2022lora} rather than full fine-tuning, leaving the bulk of the pretrained weights frozen and lowering memory overhead.
 
\subsubsection{Training Objectives}
\label{sec:objective}
The objective is organized around two concerns—\emph{relevance} (the code must rank the right documents) and \emph{stability} (the float embedding must stay close to its pretrained geometry, so the client's rerank still works). For a training query $q$ with a relevant positive $p$ and a small pool of hard negatives $\{n_i\}_{i=1}^{m}$, the total loss has just three terms,
\[
\mathcal{L} \;=\; \mathcal{L}_{\mathrm{nce}} \;+\; \lambda_{\mathrm{bin}}\,\mathcal{L}_{\mathrm{bin}} \;+\; \lambda_{\mathrm{dist}}\,\mathcal{L}_{\mathrm{dist}},
\]
where, during training, the head's logits are passed through a smooth surrogate $b(\cdot)=\tanh(\beta\,g(e(\cdot)))$ that is differentiable yet approaches the hard $\pm1$ code as $\beta$ grows (Appendix~\suppref{app:hash-training}).
 
\emph{Relevance} is carried by two ranking terms. A contrastive InfoNCE \cite{oord2018infonce} on the continuous embeddings, with temperature $T$ over the in-batch negatives plus the $m$ explicit hard negatives,
\[
\mathcal{L}_{\mathrm{nce}} = -\log \frac{\exp(\cos(e_q,e_p)/T)}{\sum_{d\in\{p\}\cup \mathcal{N}}\exp(\cos(e_q,e_d)/T)},
\]
keeps the float geometry sharp \cite{wang2020uniformity}. To train the hash codes, a \emph{listwise} margin term on the soft codes,
\[
\mathcal{L}_{\mathrm{bin}} = \mathrm{softplus}\Big(\textstyle\log\sum_{i=1}^{m}\exp\big[(\,s_b(q,n_i)-s_b(q,p)\,)/\tau_b\big]\Big),
\]
with $s_b(\cdot,\cdot)$ being the inner product of soft codes. It pushes the positive's code to outrank the entire negative pool (unrelated documents) \emph{jointly}—the ``logsumexp'' is a smooth max over the negatives, which is more discriminative than summing independent pairwise hinges.
 
\emph{Stability} is maintained by a single teacher-distillation term that anchors the adapted encoder to the \emph{frozen} pretrained encoder $e^{\mathrm{T}}$,
\[
\mathcal{L}_{\mathrm{dist}} = \tfrac{1}{|\{q,p,n_i\}|}\!\!\sum_{x\in\{q,p,n_i\}}\!\!\big(1-\cos(e_x, e^{\mathrm{T}}_x)\big),
\]
averaged over the query, the positives, and the negatives. Without this anchor, the binary loss is free to drag the encoder into a geometry that ranks well in Hamming space but reranks poorly in float.
 
\subsubsection{A Minimal Training Recipe}
\label{sec:hash-minimal}
The deep-hashing literature, grown largely around image retrieval, surrounds a ranking loss like $\mathcal{L}_{\mathrm{bin}}$ with a battery of \emph{code-quality regularizers}: a quantization penalty that forces logits to the $\pm1$ corners~\cite{li2015feature,zhu2016deep}, a bit-balance penalty that keeps each bit firing roughly half the time, and a bit independence (decorrelation) penalty that discourages redundant bits~\cite{do2016bdnn}. These desiderata descend from classical learning-to-hash and are cataloged across recent surveys~\cite{wang2014hashing,luo2023survey,he2025survey}. Each adds a loss weight, and several add a schedule to a pipeline that is already delicate to train.

We use \emph{none} of them. Under our relevance objective and the teacher anchor, the codes are already balanced and the logits saturate on their own, an effect also obtainable without an explicit balance term~\cite{hoe2021one,li2021bihalf,shen2018nash}—so adding the penalties buys no measurable quality and only enlarges the hyperparameter search. Dropping them makes the recipe both simpler and, in our setting, stronger: at matched float quality, the bare objective produces \emph{more} discriminative codes. We formulate each regularizer and explain the mechanism by which the active objective already supplies its effect in Appendix~\suppref{app:hash-training}.

Since \emph{the code length is a gate budget}, (\S\ref{sec:stage1}), one should pick the shortest code that preserves quality. We train the model with 96, 128, and 256 output code bits. On this encoder, quality saturates by $L{=}128$ and only marginally improves at $L{=}256$ (\S\ref{sec:eval-quality}), and we adopt that as the default.

\subsection{Coarse Filtering under Two-Server MPC}
\label{sec:protocol}
The \textsf{Filter} scores every document against the query once under two-server Boolean MPC and reveals a coarse candidate set $\mathcal{K}$. It has two parts—a free-XOR-plus-popcount Hamming distance (\S\ref{sec:stage1}) and a candidate-selection step we reduce to a single comparison and a single reveal (\S\ref{sec:stage2}).

\subsubsection{Hamming Distance via Wallace Popcount}
\label{sec:stage1}
Each server holds XOR shares $\share{q_j}_A,\share{q_j}_B$ of every query bit and $\share{H_{ij}}_A,\share{H_{ij}}_B$ of every corpus-code bit. For document $i$ and bit $j$, server $P_X$ locally computes $\share{x_{ij}}_X=\share{q_j}_X\oplus\share{H_{ij}}_X$. The two results satisfy $\share{x_{ij}}_A\oplus\share{x_{ij}}_B=q_j\oplus H_{ij}$, so they share a bit that is one exactly when the query and document differ at position $j$.

The servers sum these $L$ shared indicator bits with a Wallace-tree carry-save popcount~\cite{wallace1964suggestion}. At each binary weight, a $3{:}2$ compressor replaces three shared bits $a,b,c$ with a sum bit $s=a\oplus b\oplus c$ at the same weight and a carry bit $u=\mathrm{maj}(a,b,c)=c\oplus\big((a{\oplus}c)\wedge(b{\oplus}c)\big)$ at the next weight. The identity $a+b+c=s+2u$ preserves the represented integer at every layer. Repeating the compression leaves two shared bit vectors; a Boolean carry-propagate addition produces shares of $d_i=\sum_{j=1}^{L}x_{ij}=\mathrm{HW}(q\oplus H_i)$.

\paragraph{Circuit cost.} The difference bits and each compressor's sum bit use local XORs. Each carry bit uses one secure AND and one Beaver triple through the majority expression above. The complete popcount has $O(\log L)$ depth and consumes approximately $L$ AND gates per document, or $LN$ per query. It is the dominant corpus-linear component of the fixed-radius circuit, so shortening the learned code directly reduces secure filtering cost (\S\ref{sec:why-binary}).

\subsubsection{Candidate Selection with a Calibrated Radius}
\label{sec:stage2}
Given the shared distances, the \textsf{Filter} must reveal enough candidates for the client's final rerank to preserve retrieval quality. Selecting the exact top-$K$ would require an oblivious sort or selection over the shares~\cite{hamada2012sorting,bogdanov2014sorting}, which is costly in the cryptographic domain. A \emph{radius threshold} that reveals every document whose shared distance falls below a radius $t$ is far cheaper but produces a query-dependent candidate count and requires a concrete public radius.

As a baseline, with direct full-precision (DFP) embeddings, $\Pi_{\mathrm{DFP}}$ scores additive-shared int8 embeddings with $N\!\cdot\!D$ Beaver multiplications, followed by a shared top-$k$ scan. Batching all independent products reduces the inner product to two online rounds, but does not change its corpus-linear arithmetic work or traffic. This is the wall shown in Table~\ref{tab:motivation}, and it's not practically tractable beyond $\sim\!10^5$ documents.

The second protocol, $\Pi_{\mathrm{BS}}$, keeps the codes but reads the radius from the data, binary-searching the shared distances for the smallest $t^\star$ whose cumulative count reaches $K{=}\lceil\rho N\rceil$. It is gate-cheap but flawed in two ways: it is \emph{interactive}, spending $\lceil\log_2(L{+}1)\rceil\!\approx\!8$ reveal rounds per query. On any link with non-trivial round-trip time, the cost of these rounds dominates the protocol latency. We discuss more details and show the full forms of both protocols in Appendix~\suppref{app:strawmen}.

\paragraph{\sys: calibrated fixed-radius.} $\Pi_{\mathrm{FR}}$ chooses one public radius for each (checkpoint, corpus) pair before deployment. The calibration data are a small labeled sample from the target corpus's query distribution. In our evaluation, we partition each BEIR corpus's official test queries: for each of five seeds, we divide the ordered query list into equal strata and sample one query per stratum, giving 100 calibration queries (ten for Touch\'e), while the remaining queries form that seed's held-out evaluation set. The checkpoint is trained on MS MARCO; these BEIR queries are used only to choose the radius. For a target retention $\eta$, defined as the float-reranked NDCG divided by the same checkpoint's full-corpus float NDCG, each split selects the smallest integer radius that returns at least $k$ candidates for every calibration query and reaches retention $\eta$. The deployed radius $\hat t$ is the median of the five proposals (Algorithm~\ref{alg:calib}).

\paragraph{Online selection.} For every shared distance $d_i$, the servers evaluate the shared indicator $m_i=[d_i\le\hat t]$ against the public radius. All $N$ comparisons run in parallel and cost approximately $16N$ ANDs; the base protocol then reveals the indicator vector $m$ once, yielding $\mathcal{K}=\{i:m_i=1\}$. Combining the popcount and threshold, $\Pi_{\mathrm{FR}}$ uses approximately $[L+2\lceil\log_2(L{+}1)\rceil]N$ ANDs per query; its gate count and communication are linear in $N$. Different queries can produce different candidate counts under the same radius, so the evaluation reports their median and tail. The hidden-padding variant in Appendix~\suppref{app:leakage_defense} sends the two output shares only to the client and reveals only a client-padded union to the servers; it adds one client broadcast without changing the filter circuit.

\begin{algorithm}[t]
\caption{Offline NDCG-to-radius calibration}
\label{alg:calib}
\begin{algorithmic}[1]
\Require stratified calibration splits $S_1,\ldots,S_5$; codes $H$; float embeddings $E$; final rank $k$; NDCG retention $\eta$
\For{$j=1,\ldots,5$}
  \State $F_j\gets\operatorname{NDCG@}k(\operatorname{FloatTop}k(S_j,E))$
  \For{$t=0,\ldots,L$}
    \State $\mathcal K_q(t)\gets\{d:\operatorname{Hamming}(q,d)\le t\}$ for $q\in S_j$
    \State $R_j(t)\gets\operatorname{NDCG@}k(\operatorname{FloatRerank}(\mathcal K_q(t),E))/F_j$
  \EndFor
  \State $t_j\gets\min\{t:R_j(t)\ge\eta\ \land\ \min_{q\in S_j}|\mathcal K_q(t)|\ge k\}$
\EndFor
\State \Return $\hat t=\operatorname{median}(t_1,\ldots,t_5)$
\end{algorithmic}
\end{algorithm}

\subsubsection{Optional Filtering Optimizations}
\label{sec:opts}

The base system needs only the two clouds: they generate buffered triples with Silent OT (\S\ref{sec:mpc-bg}) and scan every code. Two optional but beneficial optimizations target different deployment bottlenecks without changing the fixed-radius circuit.
\textbf{Private cluster pruning} lowers online work when a deployment can trade a small retrieval quality for higher throughput; a \textbf{seeded institutional dealer} moves query-independent triple generation from the clouds to a small owner-operated service.
Either can be enabled alone, and their effects directly compose because one reduces triple consumption while the other accelerates triple supply.

\paragraph{Private cluster pruning.}
\label{sec:private-prune}

Private pruning limits MPC to a fixed number of padded Hamming clusters while hiding which clusters the query selects. With $C$ clusters and capacity factor $\alpha$, each bucket has $\lceil\alpha N/C\rceil$ slots, so probing a public number $p$ fixes the scan at $M=p\lceil\alpha N/C\rceil$ rows. During setup, the owner runs capacity-constrained binary $k$-means, pads every cluster to this capacity, masks the resulting bucket database with a seeded PRG stream, and replicates the masked database at both clouds. The client retains only centroids and mask seeds. For a query, it selects the $p$ nearest centroids locally, retrieves the corresponding buckets into XOR shares with two-server PIR, and feeds those shares directly into the Boolean registers that run $\Pi_{\mathrm{FR}}$ (see Algorithm~\ref{alg:private-prune}). Each cloud only sees uniformly random PIR selectors and mask corrections, plus public $p$ and bucket capacity; it learns neither the selected cluster IDs nor a query-dependent scan size.

\begin{algorithm}[t]
\caption{Private fixed-volume cluster pruning}
\label{alg:private-prune}
\begin{algorithmic}[1]
\Require query code $q$; centroids $Z$; masked padded buckets $\widetilde B$; public probes $p$ and radius $\hat t$
\State $J\gets$ indices of the $p$ nearest centroids to $q$
\For{$j\in J$}
  \State Client XOR-shares one-hot selector $e_j$ as $(r_j^A,r_j^B)$
  \State $P_X$ computes $s_j^X\gets\operatorname{PIR}(\widetilde B,r_j^X)$ for $X\in\{A,B\}$
  \State Client sends fresh XOR shares of bucket $j$'s mask
  \State $P_A,P_B$ correct $(s_j^A,s_j^B)$ into shares of padded bucket $B_j$
\EndFor
\State $P_A,P_B$ run $\Pi_{\mathrm{FR}}(q,\bigcup_{j\in J}B_j,\hat t)$
\State Client removes dummy rows and float-reranks the revealed candidates
\end{algorithmic}
\end{algorithm}

The public probe count is a quality-speed knob: increasing $p$ approaches the full scan while computing on more clusters. We freeze one configuration across all evaluated corpora using only calibration-query containment of the full-FR float top 10, then evaluate it once on held-out queries (\S\ref{sec:eval-throughput}).

\paragraph{Seeded institutional dealer.}
\label{sec:dealer}

An organization that can operate a small trusted server can use it as a seeded dealer for Boolean triples, following an established (2+1)-party MPC model~\cite{obsidian26, riazi2018chameleon, decock2021highperformance}. Cloud $P_A$ expands $(a_A,b_A,c_A)$ from its private seed, $P_B$ expands $(a_B,b_B)$ from another, and the dealer expands both streams, computes $c_B=(a_A\oplus a_B)\land(b_A\oplus b_B)\oplus c_A$, and sends only this one-bit-per-triple correction to $P_B$. The service stores no corpus, query, embedding, or document content; it performs sequential PRG expansion that is \textit{far smaller} than storing and scanning the outsourced retrieval index.

The dealer fills the same triple buffer as cloud Silent OT, so availability changes performance. If the dealer is absent or temporarily unavailable, the two clouds refill the buffer with Silent OT and continue along the base path. Private pruning is compatible: it simply reduces how many triples either source must provide.

\subsection{Rerank and Oblivious Fetch}
\label{sec:fetch}
Once the \textsf{Filter} reveals $\mathcal{K}$, the precise work runs on the trusted client over this small candidate set.

\paragraph{Offline content storage.}
\label{sec:offline-index}
The codes $H$ and the full embeddings are XOR-shared between the two servers in the offline setup (\S\ref{sec:overview}). Fix a public ciphertext-row width $B_{\mathrm{ct}}$ at deployment and let $B_{\mathrm{pt}}$ subtract the fixed nonce and authentication-tag overhead. The owner length-prefixes and pads every serialized document to exactly $B_{\mathrm{pt}}$ bytes, then encrypts each row with AES-256-GCM under a client-held key and a distinct nonce, producing exactly $B_{\mathrm{ct}}$ stored bytes. The same ciphertext rows are replicated at both servers. The deployment chooses $B_{\mathrm{pt}}$ at least as large as its maximum indexed record; larger application objects are segmented before corpus construction. Replication enables two-server PIR (\S\ref{sec:mpc-bg}) without reconstructing plaintext on either server; fixed-width padding removes per-record byte length from setup leakage.

The owner samples a uniform permutation $\pi\leftarrow S_N$ independently of the codes and applies it consistently to code shares, embedding shares, and ciphertext rows. The client retains the logical-ID--to--slot map, while each server sees only the permuted physical slots. This data-independent layout prevents physical adjacency from revealing hash-prefix or cluster membership. Private pruning stores each padded bucket as a PIR record; the record index is hidden by PIR and the record contents remain secret-shared after retrieval.

\paragraph{Client rerank.} Each server sends the client its XOR shares of the $|\mathcal{K}|$ candidate embeddings; the client reconstructs them, reranks on the full-precision query embedding it never released, and selects the top-$k$. This is a $|\mathcal{K}|\!\times\!D$ float dot product costing only microseconds on the client.

\paragraph{Oblivious fetch.} The client then fetches the $k$ chosen content blobs. Because the content is replicated client-encrypted ciphertext, the fetch touches no secure computation and only the key-holding client decrypts; each server sees only ciphertext (IND-CPA).
\begin{itemize}[leftmargin=0.12in]\setlength\itemsep{0.2em}
    \item \emph{Design A (download-all)} pulls all $|\mathcal{K}|$ ciphertexts and decrypts the top-$k$ locally; the choice is hidden because the client never echoes it, but the download grows with $|\mathcal{K}|$. 
    \item  \emph{Design B (PIR-over-$\mathcal{K}$, \underline{default})} runs $k$ classical two-server PIR queries~\cite{chor1998private} over the replicated ciphertext—local masked XOR-reduce at each server, zero secure computation, one round—so the client downloads only the $k$ chosen blobs and neither server learns which $k$ of $|\mathcal{K}|$ were picked.
\end{itemize}

The two cross at $k^\star{=}|\mathcal{K}|\,B_{\mathrm{ct}}/\big(2(B_{\mathrm{ct}}{+}\lceil|\mathcal{K}|/8\rceil)\big)$ items fetched (Figure~\ref{fig:fetch}), where $B_{\mathrm{ct}}$ is the ciphertext row width: in a typical RAG regime of $k\!\sim\!10$, Design~B's flat-in-$|\mathcal{K}|$ download wins by roughly two orders of magnitude, while above $\sim\!40\%$ of $|\mathcal{K}|$ download-all is cheaper. Either way, the servers learn only the candidate set; the precise ranking and the content plaintext stay on the trusted client.

\begin{figure}[t]
\centering
\includegraphics[width=0.85\columnwidth]{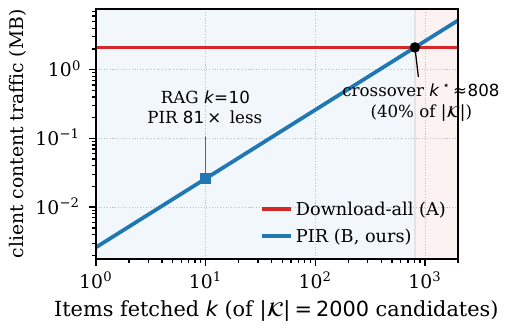}
\vspace{-0.1in}
\caption{Oblivious content fetch: client content traffic vs.\ items fetched $k$, at $|\mathcal{K}|{=}2000$ candidates and a $1$\,KB content blob (the shared $\approx\!3$\,MB candidate-embedding download excluded). \emph{Download-all} pulls all $|\mathcal{K}|$ ciphertext rows; \emph{PIR} fetches only the $k$ chosen blobs (linear).}
\label{fig:fetch}
\vspace{-0.1in}
\end{figure}

\section{Leakage Analysis}
\label{sec:leakage}
\paragraph{The servers learn a precisely defined access pattern.} \sys samples a data-independent secret slot permutation at setup and pads every content row to a public width. The resulting setup leakage consists of public dimensions, fixed row width, and protocol parameters. For a full fixed-radius scan, the per-query leakage is the public radius and the revealed set of permuted physical slots,
\[
\mathcal{L}_{\mathrm{full}}(q)=\big(\hat t,\mathcal{K}_q\big).
\]
Private pruning has a different profile: the bucket IDs remain PIR-hidden, and a server learns only the fixed scan dimensions and the indicator vector over the transient, freshly shared bucket buffer. Codes, embeddings, query bits, selected cluster IDs, the within-candidate ranking, and fetched content remain hidden from either server. Appendix~\suppref{app:leakage} formalizes setup and online leakage for both variants and proves adaptive multi-query simulation security from the security of the underlying 2PC, preprocessing, encryption, and PIR components~\cite{goldreich1987gmw,lindell2017simulate}.

\paragraph{Leakage quantification.} Repeated full-scan candidate sets can reveal approximate unlabeled document neighborhoods even though the protocol never opens codes, embeddings, ranking, content, or queries. Ordered-domain and volume-based reconstruction attacks~\cite{kellaris2016generic,grubbs2018pump} do not directly apply: our search has no ordered plaintext domain, and fixed-width content rows suppress per-document byte length.

\begin{table*}[tbp]
\centering
\caption{Retrieval quality and candidate cost on four BEIR corpora. ``Float'' is the full-corpus float reference, ``Hash'' the no-rerank floor, ``BS@1K'' exact Hamming top-1000 followed by float reranking, and ``FR'' our fixed radius calibrated for 95\% final-NDCG retention. $R_{\mathrm{float}@10}$ is the fraction of the float top-10 present in FR's candidates; $K_{50}/K_{95}$ are the median and 95th-percentile candidate counts. FR, recall, and candidate counts report mean $\pm$ standard deviation over five stratified calibration partitions (ten calibration queries for Webis-Touch\'e and 100 otherwise).}
\label{tab:quality}
\vspace{-0.1in}
\small
\setlength{\tabcolsep}{8pt}
\begin{tabular}{l c c c c c c r}
\toprule
Dataset & $L$ & Float & Hash & BS@1K & FR NDCG@10 & $R_{\mathrm{float}@10}$ & $K_{50}/K_{95}$ \\
\midrule
\multirow{3}{*}{NQ}
 & 96 & 0.5376 & 0.2329 & 0.5114 & 0.5261 $\pm$ 0.0010 & 0.938 $\pm$ 0.000 & 4,779 / 15,536 \\
 & 128 & 0.5363 & 0.2729 & 0.5214 & 0.5250 $\pm$ 0.0011 & 0.931 $\pm$ 0.000 & 1,952 / 8,191 \\
 & 256 & 0.5374 & 0.3385 & 0.5308 & 0.5298 $\pm$ 0.0010 & 0.953 $\pm$ 0.000 & 1,050 / 4,571 \\
\midrule
\multirow{3}{*}{DBpedia}
 & 96 & 0.3986 & 0.1941 & 0.3740 & 0.3794 $\pm$ 0.0032 & 0.845 $\pm$ 0.005 & 3,373 / 12,139 \\
 & 128 & 0.3996 & 0.2332 & 0.3857 & 0.3865 $\pm$ 0.0044 & 0.831 $\pm$ 0.007 & 1,207 / 5,749 \\
 & 256 & 0.4002 & 0.2987 & 0.3958 & 0.3944 $\pm$ 0.0024 & 0.869 $\pm$ 0.006 & 500 / 2,771 \\
\midrule
\multirow{3}{*}{Climate-FEVER}
 & 96 & 0.2570 & 0.0973 & 0.2516 & 0.2524 $\pm$ 0.0021 & 0.706 $\pm$ 0.001 & 1,056 / 2,633 \\
 & 128 & 0.2597 & 0.1150 & 0.2597 & 0.2539 $\pm$ 0.0020 & 0.672 $\pm$ 0.001 & 382 / 1,055 \\
 & 256 & 0.2625 & 0.1471 & 0.2653 & 0.2624 $\pm$ 0.0024 & 0.748 $\pm$ 0.001 & 377 / 1,001 \\
\midrule
\multirow{3}{*}{Webis-Touch\'e}
 & 96 & 0.2681 & 0.1678 & 0.2733 & 0.2568 $\pm$ 0.0208 & 0.903 $\pm$ 0.013 & 315 / 4,450 \\
 & 128 & 0.2686 & 0.1807 & 0.2706 & 0.2549 $\pm$ 0.0208 & 0.947 $\pm$ 0.012 & 385 / 5,289 \\
 & 256 & 0.2691 & 0.2129 & 0.2681 & 0.2546 $\pm$ 0.0207 & 0.886 $\pm$ 0.015 & 128 / 3,254 \\
\bottomrule

\end{tabular}
\vspace{-0.2in}
\end{table*}

We quantify this signal with a normalized co-occurrence estimator over ciphertext slots, inspired by access-pattern inference~\cite{islam2012access,cash2015leakage}. It predicts edges between stable encrypted slot identifiers; it does not perform graph alignment or attach plaintext labels. At $L{=}128$, conditional precision@10 is $0.017$--$0.070$ under held-out workloads and $0.099$--$0.280$ under a 20k-query coverage stress. A client-side hidden-padding variant reduces the stress result by $39$--$57\%$ at a padding ratio $r{=}2$. Appendix~\suppref{app:leakage} defines the estimator, its evaluation universe, and the modified protocol that reveals only the padded union.

\section{Evaluation}
\label{sec:eval}

\subsection{Implementation and Setup}
\label{sec:impl}

We implement FR and binary search in MP-SPDZ's semi-honest Boolean engine~\cite{keller2020mpspdz}. The corpus is bit-sliced: each code position is an \texttt{sbit} vector with one SIMD lane per document, and each query bit is broadcast across the lanes. Both protocols use our single-AND-majority Wallace popcount and differ only in candidate selection. Calibration fixes the public radius before deployment, so FR compiles one schedule per $(N,L,\hat t)$. DFP runs in the arithmetic engine over $\Z_{2^{32}}$ with additive shares prepared during setup, batched inner products, and repeated shared argmax.

Our preprocessing port converts libOTe Silent OTs~\cite{libOTe,boyle2019silentot} into MP-SPDZ's packed Boolean triples. Private pruning retrieves padded buckets through long-lived two-server PIR endpoints and loads the resulting XOR shares directly into Boolean registers. Content is stored as fixed-width AES-256-GCM ciphertext and fetched with the same two-server PIR kernel. The seeded dealer expands five AES-128-CTR streams and emits the correction share for each triple.

\paragraph{Setup.} We train the deep hash model on MS MARCO and evaluate zero-shot on four BEIR corpora with 383K to 5.42M documents: NQ~\cite{kwiatkowski2019natural}, DBpedia~\cite{hasibi2017dbpedia}, Climate-FEVER~\cite{diggelmann2020climatefever}, and Touch\'e~\cite{bondarenko2020touche}, reporting NDCG@10~\cite{thakur2021beir}. The encoder is \texttt{e5-base-v2} with LoRA, at $L\in\{96,128,256\}$. For each of the five partition seeds, calibration uses 100 queries (ten for Touch\'e, which has 49 test queries), and evaluation uses all remaining queries; the deployed radius is the median proposal. The pruning configuration is frozen across corpora: 256 clusters, a capacity factor of 1.2, and 43 probes, giving a roughly 20\% scan of each corpus. 

Protocol measurements run both parties on a dual-socket AMD EPYC~9654 host over TCP using MP-SPDZ~\cite{keller2020mpspdz} and libOTe~\cite{libOTe}. The online \textsf{Filter} uses one core per party; cloud preprocessing runs 48 parallel Silent-OT worker pairs, and the seeded dealer uses one core. To test performance in varied network environments, we throttle the TCP bandwidth with Linux \texttt{tc} to 100\,Mbps, 1\,Gbps, or 10\,Gbps for cross-cloud path and each client/dealer-to-cloud link. We study retrieval quality, protocol cost, online latency, and the communication and sustained throughput with the optional optimizations.

\begin{table*}[t]
\centering
\caption{Online candidate-generation cost at $L{=}128$. Webis-Touch\'e is measured directly; the three larger rows scale the audited per-document rates, independently validated at $10^6$ and $10^7$. DFP scales from its measured $N{=}4096$ anchor.}
\label{tab:protocols}
\vspace{-0.1in}
\small
\setlength{\tabcolsep}{7pt}
\begin{tabular}{l r S[table-format=7.0] S[table-format=6.0] S[table-format=4.1] S[table-format=3.2] S[table-format=4.1] S[table-format=3.2]}
\toprule
& & \multicolumn{2}{c}{DFP (no hash)} & \multicolumn{2}{c}{BS (binary search)} & \multicolumn{2}{c}{FR (calibrated, ours)} \\
\cmidrule(lr){3-4}\cmidrule(lr){5-6}\cmidrule(lr){7-8}
Dataset & {$N$} & {lat.\ (ms)} & {comm (MB)} & {lat.\ (ms)} & {comm (MB)} & {lat.\ (ms)} & {comm (MB)} \\
\midrule
NQ              & \num{2681468} & 1507246 & 235601 & 3802.5 & 332.51 & \textbf{1284.8} & \textbf{230.61} \\
DBpedia         & \num{4635922} & 2605839 & 407325 & 6574.0 & 574.87 & \textbf{2221.3} & \textbf{398.70} \\
Climate-FEVER   & \num{5416593} & 3044652 & 475917 & 7681.1 & 671.68 & \textbf{2595.4} & \textbf{465.83} \\
Webis-Touch\'e  & \num{382545}  & 215027  & 33611  & 542.5  & 47.44  & \textbf{183.3}  & \textbf{32.90} \\
\bottomrule
\end{tabular}
\vspace{-0.05in}
\end{table*}

\begin{figure*}[thbp]
\centering
\begin{minipage}[thbp]{0.7\textwidth}
\centering
\includegraphics[width=0.85\textwidth]{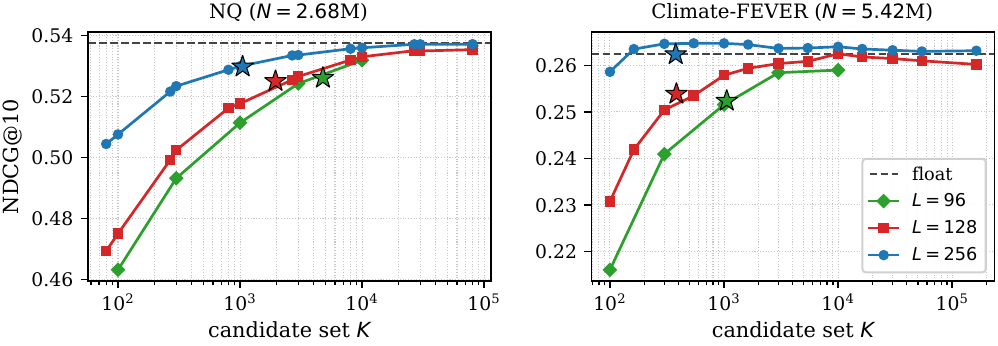}
\vspace{-0.15in}
\caption{Hybrid NDCG@10 vs.\ candidate count $K$. The dashed line is the full-corpus float reference; stars mark FR at its mean median candidate count. NQ shows the smooth width tradeoff: median candidates fall from 4,779 at 96 bits to 1,952 at 128 and 1,050 at 256; Climate-FEVER reaches the calibrated target with 1,056/382/377 candidates.}
\label{fig:quality}
\end{minipage}
\hfill
\begin{minipage}[thbp]{0.28\textwidth}
\centering
\includegraphics[width=0.85\textwidth]{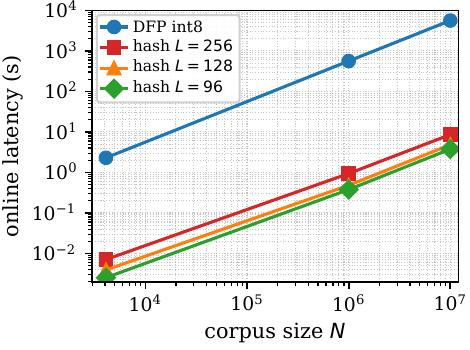}
\vspace{-0.05in}
\caption{Online latency vs.\ corpus size. Hash through $10^6$ are measured in MP-SPDZ. DFP is scaled from its measured $N{=}4096$ anchor.}
\label{fig:latency}
\end{minipage}
\vspace{-0.1in}
\end{figure*}

\subsection{Retrieval Quality}
\label{sec:eval-quality}
Table~\ref{tab:quality} reports quality and candidate counts across three separately trained code widths. We found the following. 

\ding{172} FR retains a high $95.1$--$98.1\%$ of full-corpus float NDCG at $L{=}96$, $95.2$--$97.8\%$ at $L{=}128$, and $95.3$--$99.8\%$ at $L{=}256$. This suggests that 96--128 bits could effectively preserve the high-ranking candidates with hashing, with higher length leading to diminishing returns. We also tried training for 64 bits, but the performance had major degradation compared to 96 bits, forcing a significantly larger candidate set to retain the NDCG. 
In particular, the 96-bit medians are 315--4,779, compared to the 128-bit range of 382--1,952 and the 256-bit range of 128--1,050; while on the difficult NQ and DBpedia corpora, moving from 128 to 96 bits can enlarge the median pool by $2.4\times$ and $2.8\times$. But the shorter code also reduces measured online filtering from 183 to 146\,ms at Webis-Touch\'e and from 482 to 378\,ms at one million documents. 

\ding{173} At $L{=}128$, float top-10 candidate recall ranges from 0.672 to 0.947, while final NDCG retention stays above 0.952; this is because the float top-10 is \textit{not} a golden set of relevant documents but rather a reflection of the behavioral similarity between the hash model and the original one.
Reporting only candidate recall would therefore misstate application quality, while reporting only NDCG would hide how faithfully the candidate generator reproduces the float ranking. We report both for comprehensiveness.

\ding{174} The fixed-$K$ comparison exposes the budget trade. Against BS@1K, FR spends more candidates on NQ and DBpedia to improve NDCG, but fewer on Climate-FEVER and Webis-Touch\'e under the calibrated 5\% quality relaxation. Figure~\ref{fig:quality} shows the same operating points over the fixed-$K$ curves.

\ding{175} The hash-only floor remains far below the reranked result, especially at 96 bits, confirming that the compact code is a candidate proposer and cannot function directly as the final ranker.


\subsection{Protocol Overhead}
\label{sec:eval-protocols}
Table~\ref{tab:protocols} measures the three candidate-generation protocols (DFP, BS, FR; \S\ref{sec:protocol} and Appendix~\suppref{app:strawmen}) at $L{=}128$ across four BEIR corpora for online latency and communication per query. We first analyze the results across two axes central to our design: the \emph{representation} (DFP vs.\ hash) and the \emph{protocol} (BS vs.\ FR), then analyze the cost of the rerank and fetch stage.

\paragraph{Representation: FR vs.\ DFP} The first and last column groups isolate the representation. DFP scores every document with an int8 cosine and pays from $215$\,s at Webis-Touch\'e to $3{,}045$\,s at Climate-FEVER, moving $33.6$--$475.9$\,GB. FR replaces the $N\!\cdot\!D$ arithmetic multiplications with a free XOR and a $128$-bit popcount, answering in $183$\,ms--$2.60$\,s with $32.9$--$465.8$\,MB. This is a $1{,}173\times$ latency reduction and a $\sim\!1{,}022\times$ communication reduction.

\paragraph{Protocol: FR vs.\ BS} The last two column groups isolate candidate selection: BS and FR run the identical popcount, but BS performs eight data-dependent count reveals, while FR applies one pre-calibrated threshold and reveals one indicator vector (\S\ref{sec:stage2}). Consequently, BS is $2.9$--$3.0\times$ slower and moves $1.4\times$ more data. For example, on Webis, it uses a total of 531 MPC rounds against FR's 27.


\paragraph{Rerank and Fetch bandwidth.} Once the \textsf{Filter} reveals $\mathcal{K}$, each server ships its shares of the $|\mathcal{K}|$ candidate embeddings ($2|\mathcal{K}|D$ bytes) and the client reranks locally with a $|\mathcal{K}|\!\times\!D$ float dot product. At $L{=}128$, the median candidate pools add 0.59--3.00\,MB across the four corpora, below 2\% of the corresponding 32.9--465.8\,MB \textsf{Filter} traffic. Even at $K_{95}$, the rerank stage stays below 6\% on the three million-scale corpora.
\textsf{Fetch}, with two-server PIR over replicated ciphertext, is cheaper still: each server XOR-reduces over the candidate rows with no secure computation, and at $k{=}10$ the client downloads only its chosen blobs, two orders below download-all and far from bottlenecking the pipeline.

\paragraph{Scaling behavior} Figure~\ref{fig:latency} extends the measurements to million- and ten-million-document corpora. At $L{=}128$, FR takes $0.482$\,s (86\,MB) for $N{=}10^6$ and $4.85$\,s (860\,MB) for $N{=}10^7$; DFP would take $9.4$ minutes and $1.56$ hours, respectively. The FR-over-DFP gap \emph{widens} as $L$ shrinks.

\subsection{Online Latency and Scaling}
\label{sec:eval-latency}

\begin{table*}[t]
\centering
\caption{Online latency at 10\,Gbps in s/query; parentheses give baseline$/\sys$-(Full FR) latency. $\dagger$: For Pisces, TO denotes a 300-s timeout cutoff after proper candidate-only cache warm-up.}
\label{tab:e2e}
\vspace{-0.1in}
\small
\renewcommand{\arraystretch}{1}
\begin{tabular*}{\textwidth}{@{\extracolsep{\fill}} l c c c c c}
\toprule
& & \multicolumn{4}{c}{Online Latency (s/query)} \\
\cmidrule(lr){3-6}
System & Method & {NQ} & {DBpedia} & {Climate-FEVER} & {Webis-Touch\'e} \\
\midrule
RemoteRAG~\cite{chang2025remoterag} & PHE & 9.794 (6.7$\times$) & 12.627 (5.0$\times$) & 14.284 (4.8$\times$) & 5.211 (24.8$\times$) \\
$p^2$RAG~\cite{ming2026p2rag} & 2PC, FSS & 10.287 (7.0$\times$) & 18.492 (7.3$\times$) & 21.668 (7.3$\times$) & 1.408 (6.7$\times$) \\
PANTHER~\cite{li2025panther} & PIR+MPC & \multicolumn{3}{c}{\textit{OOM at 1M documents}} & 18.387 (87.5$\times$) \\
Pisces$^\dagger$~\cite{liang2026pisces} & PSI+MPC & 182.6 (124.1$\times$) & TO ($>300$; $>118.0\times$) & TO ($>300$; $>101.1\times$) & 23.747 (113.0$\times$) \\
\midrule
\multirow{2}{*}{\sys\ (Ours)} & Full FR & 1.471 & 2.542 & 2.968 & 0.210 \\
& Private Pruning+FR & \bfseries 0.441 & \bfseries 0.872 & \bfseries 1.090 & \bfseries 0.061 \\
\bottomrule
\end{tabular*}
\renewcommand{\arraystretch}{1}
\vspace{-0.05in}
\end{table*}

\paragraph{Comparison vs.\ cryptographic-RAG baselines.} Table~\ref{tab:e2e} compares five systems, including ours, using the same E5-base-v2 embeddings on the same host machine. All systems assume a long-lived service: one-time key generation, index construction, protocol preprocessing, and process startup are outside the measured query path.

RemoteRAG~\cite{chang2025remoterag} searches its plaintext index with a differentially private perturbed query, then uses partially homomorphic encryption (PHE) to rerank. We use its paper-studied $r{=}0.05$ setting ($\varepsilon{=}15360$ at 768 dimensions), a 1024-bit Paillier key, and 96 workers; it takes a total latency of 5.21--14.28\,s across the four corpora.
$p^2$RAG~\cite{ming2026p2rag} secret-shares the full 768-dimensional embeddings to two non-colluding servers like us and securely scores every document before selecting the top results. With 192 threads and its published communication modeled at 10\,Gbps and 0.1\,ms RTT, this full-embedding scan takes 1.41--21.67\,s across the four corpora.
PANTHER~\cite{li2025panther} privately retrieves clustered posting lists and then scores their contents with MPC. To hide which list length was selected, its artifact pads the lists in each group to a common maximum; the resulting PIR database exhausts 256\,GB on all three million-scale corpora, while Webis-Touch\'e reaches $>99\%$ float-top-10 agreement in 18.39\,s.
Pisces~\cite{liang2026pisces}'s official SimHash rule stops at $15\%$ of the corpus before neighborhood expansion. Our patched complete Webis-Touch\'e run returns 58.0K--113.5K candidates, retains $89.39\%$ of float top-10 and $95.85\%$ of NDCG@10, and takes 23.75\,s at 10\,Gbps. On NQ, a steady-state query with 575.7K candidates takes 168.1\,s on loopback and 182.6\,s after 10-Gbps serialization. The corresponding DBpedia and Climate-FEVER queries select 1.09M and 1.20M candidates and both exceed a 300-s complete-path cutoff after a candidate-only cache warm-up ($\dagger$).

The dominant difference is how much data each system processes under expensive cryptography. The full \sys scan is $4.8$--$6.7\times$ faster than the closest completed baseline on each corpus. Private Cluster Pruning reduces the fixed MPC scan to about 20\% of the corpus and lowers latency to 61--1,090\,ms, a $13.1$--$22.9\times$ advantage over the closest baseline while retaining $93.9$--$97.3\%$ of full-float NDCG.

\subsection{Optional Optimizations Across Deployments}
\label{sec:eval-throughput}

Table~\ref{tab:optional} evaluates the two optimizations in \S\ref{sec:opts} independently and combined. \emph{Full} (F) scans every code; \emph{pruned} (P) uses the universal private cluster pruning configuration. \emph{Cloud OT} (O) uses 48 parallel Silent-OT worker pairs; \emph{dealer} (D) uses one owner-side CPU core. 

\paragraph{Pruning trades minor quality loss for lower online cost.} We select one clustering configuration using calibration queries only. All candidate solutions contain 256 final buckets: the flat design clusters the corpus directly into 256 buckets, while an $8\!\times\!32$ hierarchy first forms 8 groups and then 32 buckets per group; $16\!\times\!16$ and $32\!\times\!8$ are defined analogously. For each calibration query, we take the unpruned system's final float-reranked top 10 as the reference and measure what fraction survives pruning.
At matched scan budgets, the flat design retains a larger fraction on every corpus than any hierarchy. Within the flat design, we explore different bucket numbers; increasing the number of retrieved buckets (probes) from 40 to 43 (scan from 18.75\% to 20.16\% of the corpus) increases the four-corpus average containment from 95.58\% to 96.28\%, with further increases leading to diminishing returns. We end up with 256 flat clusters, a capacity factor 1.2, and 43 probes. This scans 16.3--17.7\% real documents before padding and retains 96.2--97.3\% of float NDCG on the three million-scale corpora. At 1\,Gbps, pruning lowers communication by 3.4--4.1$\times$, latency by 3.6--3.9$\times$, and raises throughput by 4.9$\times$ with cloud OT.

\begin{table*}[thbp]
\centering
\caption{Optional-optimization factorial. $R_{10}$ is containment of the full-FR float-reranked top 10; NDCG is relative to full-corpus float search. Slash-separated orders are F+O/F+D/P+O/P+D for communication and throughput, and 100\,Mbps/1\,Gbps/10\,Gbps for latency. Communication is MB/query summed across links; latency is seconds/query; throughput is queries/hour at 1\,Gbps.}
\label{tab:optional}
\vspace{-0.1in}
\small
\setlength{\tabcolsep}{4pt}
\begin{tabular*}{\textwidth}{@{\extracolsep{\fill}} l c c c c c}
\toprule
Dataset & {P quality $R_{10}$/NDCG} & {Communication} & {Full latency} & {Pruned latency} & {Throughput} \\
\midrule
NQ       & 96.8/96.2\% & 244/281/64/71   & 19.91/3.15/1.47 & 4.38/0.80/0.44 & 374/1951/1827/11851 \\
DBpedia  & 92.5/96.2\% & 418/482/104/117 & 34.24/5.42/2.54 & 7.48/1.47/0.87 & 216/1129/1049/6807 \\
Climate  & 97.4/97.3\% & 486/560/117/133 & 39.89/6.32/2.97 & 8.64/1.78/1.09 & 185/966/898/5826 \\
Touch\'e & 99.0/93.9\% & 35.8/41.0/10.4/11.5 & 2.88/0.45/0.21 & 0.68/0.12/0.06 & 2618/13678/12798/83035 \\
\bottomrule
\end{tabular*}
\end{table*}

\begin{figure}[t]
\centering
\includegraphics[width=0.85\linewidth]{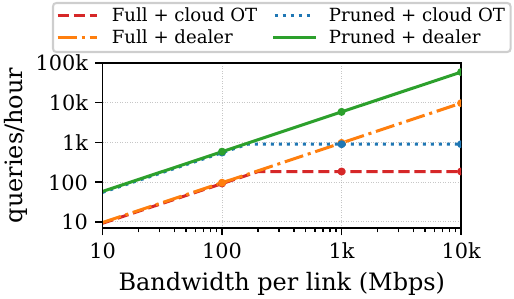}
\vspace{-0.1in}
\Description{Log-log plot of Climate-FEVER queries per hour versus 10 Mbps to 10 Gbps bandwidth for full and pruned scans, each using cloud OT or a trusted dealer. Pruning shifts both curves upward; the dealer removes the high-bandwidth cloud-OT plateau.}
\caption{Climate-FEVER sustained throughput. Markers show 100\,Mbps, 1\,Gbps, and 10\,Gbps. Pruning reduces demand; the dealer raises supply.}
\label{fig:throughput}
\vspace{-0.1in}
\end{figure}

\paragraph{The dealer is lightweight but highly beneficial.} Figure~\ref{fig:throughput} shows sustained throughput across bandwidths for the four optimization combinations. The dealer only changes how the clouds obtain triples and doesn't affect retrieval quality. For a full Climate-FEVER scan, one query requires the dealer to generate 467\,MB of pseudorandom share data and upload a 93.4\,MB correction, which takes 66.1\,ms on one measured CPU core. At 1\,Gbps, replacing cloud Silent OT with the dealer raises sustained throughput from 185 to 966 queries/hour. Supporting that rate requires 125\,MB/s of local PRG expansion and a 201-Mbps uplink, both well below the measured 7.1-GB/s single-core rate and the assumed 1-Gbps link. The organization stores no retrieval index and needs no GPU; if the dealer is unavailable, the clouds can fall back to Silent OT without changing the online circuit.

\paragraph{The optimizations are compatible.} When the clouds generate triples themselves, 48 Silent-OT worker pairs supply 38.39\,M triples/s. A full and pruned Climate-FEVER query consumes 747\,M and 154\,M triples, respectively, so triple generation alone caps their throughput at 185 and 898 queries/hour. At 100\,Mbps, cross-cloud online communication is slower than triple generation; thus, pruning raises throughput by about $6.0\times$ by reducing that traffic, while replacing OT with the dealer adds only another 5\%. 
At 1\,Gbps, however, the link can carry more queries than cloud OT can prepare; pruning alone reaches 898 queries/hour, the dealer alone 966, and with both we can reach 5,826 ($31.5\times$ the unoptimized 185).
At 10\,Gbps, cloud-OT configurations remain capped at the same preprocessing rates, whereas pruning plus the dealer reaches about 58,200 queries/hour before cross-cloud communication becomes the limiting factor. Thus, pruning reduces triple demand and online traffic, while the dealer raises triple supply; either works alone, and their gains compose.


\section{Related Work}
\label{sec:related}

\paragraph{Private Search and RAG} The ownership and trust boundary separates several private-retrieval problems with different objectives. Closest to ours, $p^2$RAG uses two semi-honest non-colluding servers and avoids secure sorting through bisection~\cite{zyskind2024prag,ming2026p2rag}. However, they still score full embeddings and performs iterative secure comparisons; \sys instead makes the representation binary and fixes the radius before the online protocol.

Single-provider systems protect an external querier from the corpus holder under a different corpus-ownership assumption. SANNS and its follow-up PANTHER combine clustering, PIR, secret sharing, garbled circuits, or HE for cryptographic nearest-neighbor search~\cite{chen2020sanns,li2025panther}; Pisces combines oblivious SimHash and BM25 filtering with MPC scoring and PIR-to-share~\cite{liang2026pisces}; and RemoteRAG uses query perturbation plus partially homomorphic scoring over a narrowed search space~\cite{chang2025remoterag}. CipheRAG combines searchable inner-product functional encryption with asymmetric LSH and decryption-enabled attention~\cite{zhou2026cipherag}.
Hua et al.~\cite{hua2026pointing} release a directionally metric-DP learned hash code to form a shortlist, then protect exact candidate reranking with BFV and the final selection with active-secure OT.
These systems keep the corpus at one provider and protect external queries or authorized content access. \sys instead protects the corpus from each of the two outsourced clouds by secret-sharing both stored embeddings and codes.

When the corpus is public or available to the search service, Tiptoe~\cite{henzinger2023tiptoe}, Wally~\cite{asi2024wally}, Speakeasy~\cite{lakshman2026speakeasy}, PACMANN~\cite{zhou2025pacmann}, and PIR-RAG~\cite{wang2025pir} focus on query privacy protection.
In client-owned outsourcing, the owner also queries the corpus. Compass hides HNSW traversal from a malicious server with Ring ORAM, but its adaptive search still requires 8--9 ORAM round trips in the evaluated configurations and $3.2$--$6.8\times$ the plaintext server memory~\cite{zhu2025compass}. MESS avoids ORAM by searching randomized hash codes~\cite{cui2026mess}. The server observes the differentially private perturbed codes, shard assignments, graph topology, traversal traces, and candidate sets. To recover recall, its default configuration routes each item to 16 of 64 HNSW shards, creating $16\times$ indexed-record replication and composing privacy loss across the 16 independently perturbed releases.


\paragraph{Deep hashing} Learned binary codes are a long-standing tool for search efficiency~\cite{luo2023survey,he2025survey,wang2014hashing}; we repurpose them as the MPC-friendly representation and co-design the lightweight training recipe for compact codes. Metric-DP hashing instead deliberately releases a randomized coarse code to support plaintext search at one server. Earlier work applies randomized response to binary codes or establishes extended DP for LSH~\cite{wang2020searching,fernandes2021dplsh}. MESS applies bitwise randomized response to fixed hash codes, obtains extended DP under the induced Hamming pseudometric, and recovers utility with a multi-graph HNSW index~\cite{cui2026mess}.
Hua et al.~\cite{hua2026pointing} randomize the learned continuous pre-sign direction under metric DP, binarize it by post-processing, and use the released code only for shortlisting before encrypted reranking and OT.
These mechanisms address leakage from code intentionally visible to the search provider; in \sys, corpus and query codes are XOR-shared and never released to either server, so no DP is needed.

\section{Conclusion}
Private outsourced retrieval is bottlenecked by the corpus-linear search, and we argue that the bottleneck is set by the representation.
We propose \sys, which combines short learned codes that make the dominant per-document operation a communication-free XOR, a calibrated Hamming radius that removes data-dependent search rounds, and two-server PIR for the final fetch.
On corpora of up to 5.42M documents, the 128-bit configuration retains $95.2$--$97.8\%$ of float NDCG with median candidate sets of 382--1,952; its online filter takes 183\,ms--2.60\,s and reduces latency by three orders of magnitude over int8 cosine MPC.
We further propose two compatible deployment optimizations to achieve higher throughput: private clustering trades minor quality loss for lower demand, while a one-core owner-side dealer accelerates triple supply and falls back to cloud OT. At 1\,Gbps they raise Climate-FEVER throughput individually by $4.9\times$/$5.2\times$ and jointly by $31.5\times$, while the organization stores no retrieval index locally.

\bibliographystyle{ACM-Reference-Format}
\bibliography{references}
\fi

\ifincludeappendix
\ifdefined\fullpaper\fi
\appendix

\section{More Details of Candidate-Generation Protocols}
\label{app:strawmen}
This appendix provides the full pseudocode and accounting for the two strawman protocols of \S\paperref{sec:stage2}, alongside our $\Pi_{\mathrm{FR}}$, as two-party blocks (Figure~\ref{fig:protocols}).

\paragraph{$\Pi_{\mathrm{DFP}}$ (direct full precision).} The data owner additive-shares every stored signed-int8 embedding over $\Z_{2^{32}}$ during offline setup, and the client similarly shares its query. The online protocol therefore starts in the arithmetic domain, batches the $N\!\cdot\!D$ Beaver multiplications into two rounds, and then runs a shared argmax to reveal the top-$k$. The cost is structural and independent of $L$, which is why no hash-side optimization touches it.

\paragraph{$\Pi_{\mathrm{BS}}$ (binary-search radius).} Sharing the popcount of $\Pi_{\mathrm{FR}}$, $\Pi_{\mathrm{BS}}$ then runs $\lceil\log_2(L{+}1)\rceil$ rounds of comparator-plus-count over the shared distances, each revealing the cumulative count below the current radius ($\approx\!2k_{\mathrm{bits}}N$ ANDs for the comparison plus $\approx\!3N$ for the count) and adjusting the radius toward $K$. It is gate-cheaper than materializing a full shared histogram, but every revealed count leaks a sample of the corpus distance CDF, and the $\lceil\log_2(L{+}1)\rceil$ reveal rounds dominate latency on bandwidth-limited links.

\begin{figure*}[t]
\centering
\small
\begin{minipage}[t]{0.47\textwidth}
\centering
\procedure[linenumbering]{$\Pi_{\mathrm{DFP}}$: direct full precision (no hash)}{
\textbf{Hold arithmetic shares } \share{X^{\mathrm a}}_b\!\in\!\Z_{2^{32}}^{N\times D},\ \share{q^{\mathrm a}}_b \\[1pt]
\share{s}_b \gets \share{X^{\mathrm a}}_b\!\cdot\!\share{q^{\mathrm a}}_b \pccomment{$ND$ arith.\ mults; $2$ rounds} \\
\mathcal{K} \gets \mathsf{TopK}(\share{s}_b) \pccomment{shared argmax} \\
\pcreturn\ \mathcal{K}\ (\text{revealed top-}k)
}

\medskip
\procedure[linenumbering]{$\Pi_{\mathrm{FR}}$: calibrated fixed-radius (ours)}{
\textbf{Public } \hat t \text{ from client calibration (Alg.~\paperref{alg:calib})} \\
\textbf{Hold } \share{H}_b\!\in\!\bits^{N\times L},\ \share{q}_b \\[1pt]
\share{E}_b \gets \share{H}_b \oplus (\mathbf{1}_N\!\otimes\!\share{q}_b) \pccomment{free XOR, local} \\
\share{d}_b \gets \popcount(\share{E}_b) \pccomment{$\approx\!LN$ ANDs} \\
\share{m}_b \gets \lepub(\share{d}_b,\hat t) \pccomment{$\approx\!16N$ ANDs} \\
m \gets \mathsf{Reveal}(\share{m}_b) \pccomment{$1$ reveal round} \\
\pcreturn\ \mathcal{K}=\{i:m_i=1\}
}
\end{minipage}\hfill
\begin{minipage}[t]{0.47\textwidth}
\centering
\procedure[linenumbering]{$\Pi_{\mathrm{BS}}$: binary-search radius (data-driven)}{
\textbf{Hold } \share{H}_b\!\in\!\bits^{N\times L},\ \share{q}_b;\ \text{target } K{=}\lceil\rho N\rceil \\[1pt]
\share{E}_b \gets \share{H}_b \oplus (\mathbf{1}_N\!\otimes\!\share{q}_b) \pccomment{free XOR, local} \\
\share{d}_b \gets \popcount(\share{E}_b) \pccomment{$\approx\!LN$ ANDs} \\
(\mathrm{lo},\mathrm{hi}) \gets (0,L) \\
\pcfor r = 1,\dots,\lceil\log_2(L{+}1)\rceil \pcdo \\
\mathrm{mid}\gets\lfloor(\mathrm{lo}{+}\mathrm{hi})/2\rfloor \\
\share{b}_b \gets \mathsf{LE}(\share{d}_b,\mathrm{mid}) \pccomment{$\approx\!2k_{\mathrm{bits}}N$ ANDs} \\
c \gets \mathsf{Reveal}(\textstyle\sum_i \share{b_i}_b) \pccomment{$\approx\!3N$ ANDs; \textbf{leaks} CDF} \\
\text{adjust } (\mathrm{lo},\mathrm{hi}) \text{ by } [\,c \ge K\,] \\
\pcendfor \pccomment{$\lceil\log_2(L{+}1)\rceil$ reveal rounds} \\
\pcreturn\ \mathcal{K}=\{i:d_i\le \mathrm{hi}\}
}
\end{minipage}
\caption{The three candidate-generation protocols as two-party blocks (party $P_b$'s view; inline comments are the dominant per-query \emph{online} cost). $\Pi_{\mathrm{DFP}}$ is $L$-independent arithmetic work over embeddings that are additive-shared during offline setup. $\Pi_{\mathrm{BS}}$ and $\Pi_{\mathrm{FR}}$ share the popcount (free XOR followed by a Wallace tree) and differ only in candidate selection: \textsc{BS} binary-searches the radius, spending $\lceil\log_2(L{+}1)\rceil$ data-dependent count reveals that each leak a sample of the corpus distance CDF, whereas \textsc{FR} consumes a client-calibrated public radius $\hat t$ and collapses candidate selection to one comparison and one reveal of the indicator $m$.}
\label{fig:protocols}
\end{figure*}

\paragraph{DFP step breakdown.} Table~\ref{tab:dfp} decomposes $\Pi_{\mathrm{DFP}}$ on the arithmetic engine at $D{=}768$. Batching all independent products makes the inner product two online rounds; the remaining rounds come from the repeated shared argmax. Above $N{=}256$, total wall-clock remains within $6\%$ of $562\,\mu$s/doc and communication converges to $87.9$\,KB/doc. We therefore extrapolate DFP's columns in Tables~\paperref{tab:motivation} and~\paperref{tab:protocols} from the measured $N{=}4096$ anchor: materializing the arithmetic triples for a BEIR-scale run is infeasible on this machine. The cost remains structural—$N\!\cdot\!D$ arithmetic multiplications followed by shared top-$k$, with neither term involving $L$.

\begin{table}[htbp]
\centering
\caption{DFP baseline step breakdown (arithmetic MPC, $D{=}768$, $k{=}10$, online-only). Corpus embeddings and queries are additive-shared during offline setup; the table includes the batched inner products and shared top-$k$.}
\label{tab:dfp}
\small
\setlength{\tabcolsep}{4.5pt}
\begin{tabular}{S[table-format=4.0] S[table-format=1.3] S[table-format=1.3] S[table-format=1.3] S[table-format=3.2]}
\toprule
{$N$} & {cosine (s)} & {top-$k$ (s)} & {total (s)} & {MB} \\
\midrule
64   & 0.042 & 0.014 & 0.058 & 5.59 \\
256  & 0.083 & 0.050 & 0.136 & 22.46 \\
1024 & 0.323 & 0.204 & 0.543 & 89.94 \\
4096 & 1.390 & 0.853 & 2.302 & 359.89 \\
\bottomrule
\end{tabular}
\end{table}

\section{Leakage: Formal Simulation Security and Relational Analysis}
\label{app:leakage}

This appendix defines the setup and query leakage of each deployed variant, proves adaptive multi-query simulation security, and quantifies the relational information contained in the full-scan access pattern.

\subsection{Experiments and Leakage Functions}
Let $\lambda$ be the security parameter and let the logical database be $\mathsf{DB}=(H,E,W)$, where $H\in\bits^{N\times L}$ contains binary codes, $E\in\bits^{N\times B_E}$ contains serialized fixed-width embedding rows, and $W=(W_i)_{i\in[N]}$ contains content rows. The setup algorithm samples $\pi\leftarrow S_N$ independently of $\mathsf{DB}$, writes $\widetilde H_i=H_{\pi(i)}$, $\widetilde E_i=E_{\pi(i)}$, pads every $W_{\pi(i)}$ to the derived payload width $B_{\mathrm{pt}}$, and encrypts it with a distinct nonce into a $B_{\mathrm{ct}}$-byte row. For $b\in\{A,B\}$, server $P_b$ receives
\begin{align*}
 \mathsf{st}_b&=\big(\share{\widetilde H}_b,\share{\widetilde E}_b,C,\mathsf{pp}_v\big),\\
 C_i&\gets\mathsf{Enc}_{K_{\mathrm{enc}}}
   \big(\mathsf{nonce}_i,\mathsf{pad}_{B_{\mathrm{pt}}}(W_{\pi(i)})\big),
 \qquad |C_i|=B_{\mathrm{ct}}.
\end{align*}
where $v\in\{\mathrm{full},\mathrm{prune}\}$ selects the protocol variant and $\mathsf{pp}_v$ contains its public dimensions and circuit parameters. The client retains $K_{\mathrm{enc}}$ and $\pi$.

\begin{definition}[Setup leakage]
Define the public parameter tuples
\begin{align*}
 \mathsf{pp}_{\mathrm{full}}&=(N,L,B_E,B_{\mathrm{ct}},\hat t,k),\\
 \mathsf{pp}_{\mathrm{prune}}&=(\mathsf{pp}_{\mathrm{full}},C_{\mathrm{clust}},p,B_{\mathrm{bucket}}).
\end{align*}
The setup leakage is
\[
 \mathcal{L}^{v}_{\mathrm{stp}}(\mathsf{DB})=\mathsf{pp}_v.
\]
The permutation $\pi$, plaintext lengths before padding, hash-prefix order, and cluster membership are absent from setup leakage.
\end{definition}

For a full scan and query code $q$, define
\[
 d_{q,i}=\operatorname{HW}(q\oplus\widetilde H_i),\qquad
 \mathcal K_q=\{i\in[N]:d_{q,i}\le\hat t\}.
\]

\begin{definition}[Full-scan query leakage]
\label{def:leakage-full}
The base fixed-radius query leakage is
\[
 \mathcal L^{\mathrm{full}}_{\mathrm{qry}}(\mathsf{DB},q)=(\hat t,\mathcal K_q).
\]
It reveals stable permuted slot identities, hence $|\mathcal K_q|$ and equality of response sets. Distinct queries may induce the same response set.
\end{definition}

Private pruning retrieves $p$ padded buckets into freshly randomized XOR shares. Let $M=pB_{\mathrm{bucket}}$ be the public buffer length, let $R_q\in(\bits^L\cup\{\bot\})^M$ be the hidden ordered buffer, and define $\mu_q[\ell]=[R_q[\ell]\ne\bot\wedge\operatorname{HW}(q\oplus R_q[\ell])\le\hat t]$.

\begin{definition}[Pruned-scan query leakage]
\label{def:leakage-prune}
The private-pruning leakage is
\[
 \mathcal L^{\mathrm{prune}}_{\mathrm{qry}}(\mathsf{DB},q)=(\hat t,M,\mu_q).
\]
The selected bucket identifiers and the map from buffer positions to persistent corpus slots remain hidden. Fresh PIR selector shares and fresh mask-correction shares prevent a server from linking a buffer position across queries.
\end{definition}

The complete leakage for an adaptively generated sequence $q_1,\ldots,q_Q$ is
\[
 \mathcal L^v(\mathsf{DB};q_1,\ldots,q_Q)=
 \left(\mathcal L^v_{\mathrm{stp}}(\mathsf{DB}),
       \big(\mathcal L^v_{\mathrm{qry}}(\mathsf{DB},q_j)\big)_{j=1}^{Q}\right).
\]

For comparison, binary-search selection uses $T=\lceil\log_2(L+1)\rceil$ public thresholds $\tau_{q,1},\ldots,\tau_{q,T}$ and opens $c_{q,r}=\sum_i[d_{q,i}\le\tau_{q,r}]$ before revealing its final set $\mathcal K_q^{\mathrm{BS}}$. Its query leakage is
\[
 \mathcal L^{\mathrm{BS}}_{\mathrm{qry}}(q)=
 \left((\tau_{q,r},c_{q,r})_{r=1}^{T},\tau_q^\star,\mathcal K_q^{\mathrm{BS}}\right).
\]
When FR and BS are conditioned to produce the same final pair $(\tau_q^\star,\mathcal K_q)$, FR leakage is the projection of BS leakage that deletes the intermediate count pairs. Without this conditioning, the two protocols implement different selection functions and their leakage tuples are not ordered by set inclusion.

\subsection{Simulation Security}
For $P\in\{P_A,P_B\}$, let $\mathsf{REAL}^{v}_{P}(1^\lambda,\mathsf{DB},\mathbf q)$ be $P$'s complete state, random tape, preprocessing transcript, received messages, sent messages, and opened values in a real execution on the adaptively generated query sequence $\mathbf q$. Let $\mathsf{IDEAL}^{v}_{P,\mathcal S}(1^\lambda,\mathcal L^v)$ be the output of a simulator receiving leakage online in the same order.

We use four standard assumptions. (A1)~$\mathsf{Enc}$ is multi-message IND-CPA secure for distinct nonces, and the PRG masking the private-pruning bucket database is pseudorandom. (A2)~The Boolean protocol securely realizes its deterministic circuit against one semi-honest corrupted server. (A3)~Silent-OT preprocessing or the seeded dealer securely realizes independent one-time Beaver triples; the dealer PRG is secure and no triple is reused. (A4)~The two-server XOR-PIR uses independent uniform selector shares and non-colluding servers.

\begin{theorem}[Adaptive multi-query simulation]
\label{thm:sim}
Under (A1)--(A4), for every $v\in\{\mathrm{full},\mathrm{prune}\}$, every $P\in\{P_A,P_B\}$, and every polynomially bounded adaptive query sequence $\mathbf q$, there exists a PPT simulator $\mathcal S^v_P$ such that
\begin{multline*}
 \left\{\mathsf{REAL}^{v}_{P}(1^\lambda,\mathsf{DB},\mathbf q)\right\}_{\lambda\in\mathbb N}
 \stackrel{c}{\approx}\\
 \left\{\mathsf{IDEAL}^{v}_{P,\mathcal S^v_P}
 \big(1^\lambda,\mathcal L^v(\mathsf{DB};\mathbf q)\big)\right\}_{\lambda\in\mathbb N}.
\end{multline*}
\end{theorem}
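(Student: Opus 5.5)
We construct $\mathcal S^v_P$ explicitly and argue indistinguishability by a sequence of hybrids that replace, one component at a time, the real view of the corrupted server $P$ with values computable from $\mathcal L^v$ alone. \emph{Setup simulation:} from $\mathsf{pp}_v$ the simulator samples $\share{\widetilde H}_P$ and $\share{\widetilde E}_P$ as uniform bit matrices of dimensions $N\times L$ and $N\times B_E$. This is perfect, since each share alone is a one-time pad. It then produces $N$ encryptions of $0^{B_{\mathrm{pt}}}$ under a fresh key with distinct nonces. In the prune variant it also produces a uniformly random string of the masked-bucket-database length, which is fixed by $(N,C_{\mathrm{clust}},B_{\mathrm{bucket}},L)$. \emph{Query simulation:} for each adaptive $q_j$, received in order, the simulator samples uniform shares of the query code. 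It runs the Boolean circuit simulator guaranteed by (A2), programmed with the opened output $\mathcal K_{q_j}$ as an indicator vector for the full variant, or $\mu_{q_j}$ for the prune variant. It samples Beaver triples as fresh uniform shares, or as the corrupted party's PRG-derived stream plus a uniform correction for $P_B$ under the dealer, per (A3). Finally it samples the PIR selectors for bucket loading and for \textsf{Fetch}, together with the mask-correction shares, as uniform strings. The embedding shares returned to the client are part of $P$'s state and were already simulated. The subsequent \textsf{Fetch} answers are deterministic functions of the selectors and $C$.

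The hybrids run as follows. $\mathsf{H}_0$ is the real execution. $\mathsf{H}_1$ replaces triples produced by Silent OT or the dealer with ideal independent triples, using (A3) together with PRG security of the dealer seed that $P$ does not hold. $\mathsf{H}_2$ replaces each Boolean-MPC transcript by the (A2) simulator given $P$'s input shares and the opened output. The reduction must hold for an adaptive sequence, so the reduction embeds the challenge at one query and fills the others honestly, with a standard $Q$-step hybrid, $Q=\mathrm{poly}(\lambda)$. $\mathsf{H}_3$ replaces the PIR selectors and mask corrections by uniform strings. By (A4), $P$'s selector share is uniform and independent of the index. The mask-correction shares are fresh XOR shares, so they are uniform as well. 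Together with PRG pseudorandomness of the bucket masks from (A1), this also removes any link between buffer positions and stored buckets. That step justifies why $\mu_q$ rather than persistent slot IDs suffices in the prune leakage. $\mathsf{H}_4$ replaces the ciphertexts by encryptions of zeros, using multi-message IND-CPA from (A1). The replacement is legitimate because the key and $\pi$ never reach $P$, and padding makes all plaintexts the same length $B_{\mathrm{pt}}$.

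In the full variant, the opened indicator is defined over permuted slots, and by construction it equals $\mathcal K_q$ as given in Definition~\ref{def:leakage-full}. The permutation therefore never needs to be simulated separately. It only explains why no ordering information appears in $\mathsf{pp}_v$.

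The main obstacle is composing the per-query (A2) simulator with correlated long-lived state. The same database shares are inputs to every query. In the prune variant, the bucket shares that feed the circuit are outputs of PIR combined with correction shares, not fresh client inputs. I would handle this by treating the whole sequence as one reactive functionality whose only openings are the leaked vectors. I would then invoke semi-honest sequential composition, noting that $P$'s inputs to each circuit are uniform shares already fixed in $P$'s simulated view. Each invocation is then simulated conditioned on those inputs, and triple independence and non-reuse from (A3) keep the rounds independent.
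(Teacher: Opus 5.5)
Your proof is correct and follows essentially the same hybrid argument as the paper. It uses the same component replacements: uniform persistent shares, (A3)-simulated triples, a per-query (A2) simulator programmed with $\mathcal K_{q_j}$ or $\mu_{q_j}$, uniform PIR selectors and mask corrections with responses recomputed from the simulated database, and zero-ciphertexts plus a uniform masked bucket database via (A1). The only differences are that you order the hybrids differently and fold the share step into the simulator; your explicit $Q$-step hybrid for the adaptive (A2) replacement spells out what the paper calls adaptive sequential composition.
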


\begin{proof}
Fix $P=P_A$; symmetry gives the construction for $P_B$. Define hybrids $\mathsf H_0,\ldots,\mathsf H_5$ over the complete multi-query view.

\emph{$\mathsf H_0$.} This is $\mathsf{REAL}^{v}_{P_A}(1^\lambda,\mathsf{DB},\mathbf q)$.

\emph{$\mathsf H_1$.} Replace every $C_i=\mathsf{Enc}_{K_{\mathrm{enc}}}(\mathsf{nonce}_i,\mathsf{pad}_{B_{\mathrm{pt}}}(W_{\pi(i)}))$ by $C_i^0=\mathsf{Enc}_{K_{\mathrm{enc}}}(\mathsf{nonce}_i,0^{B_{\mathrm{pt}}})$. For $v=\mathrm{prune}$, also replace the PRG-masked padded bucket database by an equal-length uniform string. A standard multi-message encryption hybrid followed by a PRG hybrid and (A1) give $\mathsf H_0\stackrel{c}{\approx}\mathsf H_1$.

\emph{$\mathsf H_2$.} Generate $\share{\widetilde H}_{A}\leftarrow\bits^{N\times L}$ and $\share{\widetilde E}_{A}\leftarrow\bits^{N\times B_E}$ uniformly, and for each query generate $\share{q}_A\leftarrow\bits^L$ uniformly. This changes no distribution: in XOR sharing, either share is uniform for every fixed secret. Sample one persistent pair $(\share{\widetilde H}_{A},\share{\widetilde E}_{A})$ and reuse it throughout the sequence, thereby preserving equality and overlap among all messages derived from the same stored row.

\emph{$\mathsf H_3$.} Replace the Silent-OT or dealer preprocessing view by the simulator of (A3), maintaining a monotone counter so every nonlinear gate consumes a distinct simulated triple. Sequential composition over the polynomial number of gates and queries yields $\mathsf H_2\stackrel{c}{\approx}\mathsf H_3$.

\emph{$\mathsf H_4$.} Process queries in arrival order. For query $q_j$, invoke the simulator guaranteed by (A2) on the corrupted party's sampled input shares and the clear output prescribed by $\mathcal L^v_{\mathrm{qry}}$. For $v=\mathrm{full}$, this output is the indicator of $\mathcal K_{q_j}$; for $v=\mathrm{prune}$, it is $\mu_{q_j}$. Adaptive sequential composition applies because the next query may depend on earlier leakage but each circuit invocation uses fresh triples and a fixed public schedule. Hence $\mathsf H_3\stackrel{c}{\approx}\mathsf H_4$.

\emph{$\mathsf H_5$.} For every bucket or content PIR invocation, sample $P_A$'s selector share uniformly. Compute its response by applying the specified XOR-linear PIR algorithm to that selector and the simulated persistent database $C^0$ or simulated masked bucket database. Thus responses retain their exact algebraic dependence on the database; they are not replaced by independent uniform strings. For private pruning, sample the client's fresh mask-correction share uniformly and derive the resulting transient buffer share. By (A4), the selector distribution is identical to the real one and is independent of the selected index. In a full scan, candidate-embedding messages are read from the single persistent simulated share array at the positions prescribed by leakage, preserving repeated-row correlations. Under pruning, they are read from the freshly randomized transient buffer at positions prescribed by $\mu_q$. Therefore $\mathsf H_4\equiv\mathsf H_5$ for PIR privacy and simulated storage, up to the component replacements already made.

The simulator $\mathcal S^v_{P_A}$ implements $\mathsf H_5$ using only $\mathcal L^v$: it samples persistent shares, zero ciphertexts, and preprocessing state at setup, then extends the same state for every online leakage tuple. Consequently $\mathsf H_0\stackrel{c}{\approx}\mathsf H_5=\mathsf{IDEAL}^{v}_{P_A,\mathcal S^v_{P_A}}$.
\end{proof}

The theorem applies to one corrupted server. Colluding servers reconstruct the XOR-shared codes and embeddings. The replicated content remains confidential against their collusion under (A1) because neither server holds the encryption key.

\subsection{Ciphertext-only Relational Inference}
Each full-scan set $\mathcal K_q$ is a Hamming ball around an unknown query. Repeated balls induce a kernel-blurred proximity statistic over stable encrypted slot identifiers. Tight radii produce sharper conditional neighborhoods over fewer observed slots; loose radii produce broader, less resolved neighborhoods. Passive workload support bounds the observation: a slot absent from every candidate set has no incidence edge.

\paragraph{Additional empirical assumption.} The following attack experiment adds an auxiliary-information restriction beyond the cryptographic theorem: the observing server has no plaintext or encoded reference corpus, no slot-to-document mapping, and no known query-to-plaintext anchors. This assumption isolates ciphertext-only relational leakage. Known-data, similar-data, and known-query attacks require separate auxiliary inputs and are discussed below.

\paragraph{Estimator and evaluation universe.} For workload $\mathcal Q$, define the incidence matrix $X\in\bits^{|\mathcal Q|\times N}$ by $X_{q,i}=[i\in\mathcal K_q]$, the frequency $f_i=\sum_qX_{q,i}$, the observed universe $P$, and the anchor-eligible universe $U$:
\[
 P=\{i\in[N]:f_i\ge1\},\qquad U=\{i\in P:f_i\ge2\}.
\]
For distinct $i,j\in P$, the estimator is the cosine of their incidence columns,
\[
 M_{ij}=\sum_qX_{q,i}X_{q,j},\qquad
 s(i,j)=\frac{M_{ij}}{\sqrt{f_if_j}}.
\]
For each of five seeds (42--46), we independently sample the data-independent slot permutation and a uniform anchor set $S\subseteq U$ of size $\min(1500,|U|)$; Webis-Touch\'e at $L=256$ uses all $640$ eligible slots. For each $i\in S$, $\widehat\Gamma_{10}(i)$ contains the ten highest-scoring elements of $P\setminus\{i\}$, and $\Gamma^{H}_{10}(i)$ contains the ten nearest elements of the same observed universe under true Hamming distance. Both rankings break ties by ascending permuted slot identifier. We report
\[
 \operatorname{P@10}=\frac1{|S|}\sum_{i\in S}
 \frac{|\widehat\Gamma_{10}(i)\cap\Gamma^H_{10}(i)|}{10}.
\]
The condition-specific chance baseline is $10/(|P|-1)$ for $|P|>10$. The popularity null replaces $\widehat\Gamma_{10}(i)$ by the ten highest-$f_j$ elements of $P\setminus\{i\}$ and is scored by the same formula. Corpus observation coverage is $|P|/N=|\cup_q\mathcal K_q|/N$; P@10 is conditional on anchors in $S\subseteq U$.

We use two workloads. \emph{Held-out} uses the BEIR test queries excluded from radius calibration: 3,352 for NQ, 300 for DBpedia, 1,435 for Climate-FEVER, and 39 for Webis-Touch\'e. The deployed $L=128/256$ radii are respectively $35/72$, $34/70$, $34/72$, and $28/50$; the corresponding held-out $|U|$ values are $1{,}516{,}705/1{,}030{,}056$, $86{,}900/28{,}602$, $50{,}504/34{,}044$, and $2{,}463/640$. \emph{Coverage stress} samples 20,000 corpus rows without replacement as query centers and evaluates the resulting passive transcript; the stress workload increases coverage but does not grant query-injection capability to the server.

\paragraph{Results.} Table~\ref{tab:leak-attack} shows three effects. At $L=128$, held-out P@10 is $0.017$--$0.070$, establishing a measurable unlabeled edge signal. Coverage controls its corpus scope: NQ's held-out workload observes $76.0\%$ of slots, whereas Climate-FEVER observes $2.4\%$. Under coverage stress, P@10 reaches $0.099$--$0.280$ at $L=128$ and $0.082$--$0.245$ at $L=256$. Longer codes reduce every stress result, while tighter balls can increase conditional held-out precision over a smaller observed region. Across all cells, the five-seed standard deviation is at most $0.0073$.

\begin{table*}[t]
\centering
\caption{Ciphertext-only Hamming-neighborhood inference at the deployed radius. Each cell reports five-seed P@10 mean$\pm$std / corpus coverage $|P|/N$. Anchors lie in $U=\{i:f_i\ge2\}$ and neighbors in $P=\{i:f_i\ge1\}$; chance and popularity use the same $P$.}
\label{tab:leak-attack}
\small
\setlength{\tabcolsep}{3.5pt}
\begin{tabular}{@{}l r c c c c@{}}
\toprule
 & & \multicolumn{2}{c}{$L{=}128$} & \multicolumn{2}{c}{$L{=}256$} \\
\cmidrule(lr){3-4}\cmidrule(lr){5-6}
Corpus & $N$ & held-out & stress & held-out & stress \\
\midrule
NQ              & 2.68M & $.068{\pm}.002$/76\% & $.200{\pm}.003$/98\% & $.072{\pm}.002$/61\% & $.170{\pm}.004$/94\% \\
DBpedia         & 4.64M & $.020{\pm}.002$/11\% & $.104{\pm}.002$/95\% & $.029{\pm}.001$/4.9\% & $.082{\pm}.005$/83\% \\
Climate-FEVER   & 5.42M & $.070{\pm}.002$/2.4\% & $.099{\pm}.005$/95\% & $.101{\pm}.003$/1.4\% & $.088{\pm}.002$/89\% \\
Webis-Touch\'e  & 383K  & $.017{\pm}.002$/9.4\% & $.280{\pm}.007$/92\% & $.041{\pm}.007$/6.8\% & $.245{\pm}.005$/75\% \\
\bottomrule
\end{tabular}
\end{table*}

\subsection{Hidden Candidate Padding}
\label{app:leakage_defense}
Padding protects membership only if neither server first receives the unpadded indicator. The protected variant changes the Filter--Rerank boundary as follows. After computing shared indicators $\share{m}_A,\share{m}_B$, each server sends its share directly to the client; the servers do not reconstruct $m$. The client reconstructs $\mathcal K_q$, samples
\begin{align*}
 s_q&=\min\{\lceil r|\mathcal K_q|\rceil,N-|\mathcal K_q|\},\\
 D_q&\xleftarrow{\$}
 \{D\subseteq[N]\setminus\mathcal K_q:|D|=s_q\}.
\end{align*}
The client sends only $\mathcal K_q^+=\mathcal K_q\cup D_q$ to both servers. They return embedding shares for $\mathcal K_q^+$; the client removes $D_q$ before reranking. A single server's query leakage becomes
\[
 \mathcal L^{\mathrm{pad}(r)}_{\mathrm{qry}}(q)=(\hat t,\mathcal K_q^+).
\]
Ratio padding preserves noisy membership but reveals $|\mathcal K_q|$ up to deterministic rounding through $|\mathcal K_q^+|$ and public $r$. A fixed-target variant samples $B_{\mathrm{pad}}-|\mathcal K_q|$ decoys for a public $B_{\mathrm{pad}}\ge|\mathcal K_q|$ and thereby fixes response length; queries exceeding $B_{\mathrm{pad}}$ use an explicitly declared overflow policy.

\begin{corollary}[Hidden-padding simulation]
\label{cor:pad-sim}
Let
\[
 \mathcal L^{\mathrm{pad}(r)}(\mathsf{DB};\mathbf q)=
 \left(\mathcal L^{\mathrm{full}}_{\mathrm{stp}}(\mathsf{DB}),
       \big(\mathcal L^{\mathrm{pad}(r)}_{\mathrm{qry}}(q_j)\big)_{j=1}^{Q}\right).
\]
Under (A1)--(A4), the conclusion of Theorem~\ref{thm:sim} holds for the hidden-padding protocol with leakage $\mathcal L^{\mathrm{pad}(r)}$.
\end{corollary}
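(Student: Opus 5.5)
The plan is to reuse the hybrid sequence $\mathsf H_0,\ldots,\mathsf H_5$ from the proof of Theorem~\ref{thm:sim} with $v=\mathrm{full}$. We fix $P=P_A$; the case $P_B$ follows by symmetry. Only the Filter--Rerank boundary changes. In the hidden-padding protocol, the Boolean circuit has no opened output toward either server: each server sends its share of $m$ to the client. The only new message a server receives is the set $\mathcal K_q^+$ broadcast by the client. It then returns embedding shares indexed by $\mathcal K_q^+$.

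Hybrids $\mathsf H_1$--$\mathsf H_3$ (zero ciphertexts, uniform persistent shares, simulated triples) carry over verbatim, since they depend only on setup leakage $\mathcal L^{\mathrm{full}}_{\mathrm{stp}}$. The key modification is in $\mathsf H_4$. For each query, we invoke the (A2) simulator for a circuit with no output to the corrupted party. In the real protocol, the only output-related message $P_A$ sees is its own share $\share{m}_A$, which it sends rather than receives. The client-directed reveal therefore adds nothing to $P_A$'s view beyond the simulated circuit transcript. I would phrase this by viewing the protocol as computing a function whose output goes to the client only, so the corrupted-server simulator needs no output value. Adaptive sequential composition still applies, because each invocation uses fresh triples and a fixed public schedule $(N,L,\hat t)$.

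Next, the simulator appends the received message $\mathcal K_q^+$, which it reads directly from $\mathcal L^{\mathrm{pad}(r)}_{\mathrm{qry}}(q)$. In the real world, $\mathcal K_q^+$ is exactly this leaked value. Since the leakage is defined as the realized padded set, and not merely its distribution, this step is an identity rather than a statistical argument. No analysis of the decoy sampling $D_q$ is needed. The embedding-share responses are then read from the single persistent simulated array $\share{\widetilde E}_A$ at the positions in $\mathcal K_q^+$. This preserves cross-query correlations exactly as in the full-scan argument. $\mathsf H_5$ for the content PIR is unchanged: selectors are uniform by (A4), and responses are computed on $C^0$.

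The main obstacle is making the $\mathsf H_4$ step rigorous. We must argue that removing the opened indicator does not break the (A2) simulator, and that the server's view of $\mathcal K_q^+$ is not correlated with circuit messages in any way the leakage fails to capture. I would first check that the circuit transcript is simulatable independently of $m$. This holds because all masked AND openings are uniform given fresh triples. Then $\mathcal K_q^+$ is a client message that depends only on client-side randomness and on $m$, and $m$ is hidden from $P_A$'s view by the independence of $\share{m}_B$. Hence the joint distribution factors as (simulated transcript, leaked $\mathcal K_q^+$). Chaining the hybrids then yields computational indistinguishability from $\mathsf{IDEAL}$ with leakage $\mathcal L^{\mathrm{pad}(r)}$. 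The same argument covers the fixed-target variant, with $B_{\mathrm{pad}}$ added to the public parameters.
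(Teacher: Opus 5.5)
Your proposal is correct and follows the paper's proof. Only hybrid $\mathsf H_4$ changes: the circuit is simulated with no clear server output (the server's own share sent to the client is determined by, and uniform in, its simulated view), $\mathcal K_q^+$ is supplied from the query leakage as the next client message, and the returned embedding shares are read from the persistent simulated array, with all other hybrids unchanged. Your additional argument that the server's view is independent of $m$, so that $\mathcal K_q^+$ can be appended directly from the leakage, makes explicit what the paper's one-paragraph proof leaves implicit.
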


\begin{proof}
In hybrid $\mathsf H_4$, simulate the corrupted server's circuit view with no clear server output; its share sent to the client is uniform. Supply $\mathcal K_q^+$ from query leakage as the subsequent client message, and read the corresponding embedding shares from the persistent simulated array. The remaining hybrids are unchanged.
\end{proof}

Table~\ref{tab:leak-pad} evaluates exactly this server-visible union under the $L=128$ coverage stress. At $r=2$, P@10 falls by $39$--$57\%$ across the four corpora, while the mean revealed fraction is $0.24$--$1.10\%$. This experiment quantifies resistance to the stated estimator; access-pattern obfuscation and search-pattern leakage remain separate dimensions~\cite{islam2012access,oya2021hiding}. Stronger accumulation control requires an owner-driven epoch change that samples a fresh secret slot permutation, re-shares the index, and re-encrypts rows, or a leakage-suppression construction that hides response linkage across epochs~\cite{kamara2018structured}.

\begin{table*}[t]
\centering
\caption{Hidden-padding dose--response at $L{=}128$ under the 20k-query coverage stress. Each cell reports mean revealed fraction $|\mathcal K_q^+|/N$ / five-seed P@10 mean$\pm$std using the corresponding observed and anchor-eligible universes.}
\label{tab:leak-pad}
\small
\setlength{\tabcolsep}{5pt}
\begin{tabular}{@{}l c c c c@{}}
\toprule
pad ratio $r$ & $0$ & $0.5$ & $1$ & $2$ \\
\midrule
NQ              & $.09\%/.200{\pm}.003$ & $.14\%/.146{\pm}.005$ & $.19\%/.129{\pm}.002$ & $.28\%/.123{\pm}.003$ \\
DBpedia         & $.09\%/.104{\pm}.002$ & $.13\%/.075{\pm}.002$ & $.17\%/.067{\pm}.004$ & $.26\%/.063{\pm}.004$ \\
Climate-FEVER   & $.08\%/.099{\pm}.005$ & $.12\%/.067{\pm}.003$ & $.16\%/.064{\pm}.004$ & $.24\%/.054{\pm}.002$ \\
Webis-Touch\'e  & $.37\%/.280{\pm}.007$ & $.55\%/.169{\pm}.002$ & $.73\%/.148{\pm}.002$ & $1.10\%/.120{\pm}.002$ \\
\bottomrule
\end{tabular}
\end{table*}

\subsection{Relation to Leakage-abuse Attacks}
IKK and Count use access-pattern co-occurrence together with known or sampled plaintext information to recover query labels~\cite{islam2012access,cash2015leakage}. Refined score attacks use a distributionally similar corpus and a small set of known query anchors~\cite{damie2021highly}; frequency-based attacks additionally exploit the query distribution and search pattern~\cite{oya2021hiding}. Under the additional empirical assumption above, these auxiliary graphs and anchors are unavailable, so our experiment measures unlabeled edge inference rather than semantic query recovery. Partial known-data, similar-data, and known-query regimes can attach semantics to the recovered relation and constitute strictly richer evaluations.

Per-document volume attacks~\cite{blackstone2020revisiting} exploit ciphertext byte lengths. Our fixed-width padding removes that input at the cost of $NB_{\mathrm{ct}}$ content storage. The remaining server-visible objects are the public row count and the access-pattern leakage specified above.

\section{PIR Fetch Overhead}
\label{app:fetch}

The \textsf{Fetch} step—two-server PIR over the replicated content ciphertext (\S\paperref{sec:fetch})—performs \emph{no} secure computation: zero AND gates, zero Beaver triples, and one network round. Its cost is the PIR query and ciphertext response plus a local plaintext XOR-reduce at each server over the $|\mathcal{K}|$ ciphertext rows. Table~\ref{tab:fetch-overhead} applies this model to the same four $L{=}128$ operating points as the main evaluation. For a 10-KB blob, client-facing traffic is 206--210\,KB, only $0.04$--$0.63\%$ of the \textsf{Filter} traffic. The local scan adds $0.15$--$2.16\%$ latency at 10\,KB and $1.51$--$21.5\%$ at 100\,KB; Webis-Touch\'e has the largest ratio because its corpus-linear \textsf{Filter} is the smallest. The estimate uses $2k|\mathcal{K}|B_{\mathrm{ct}}$ bytes at 20\,GB/s, while the \textsf{Filter} columns are measured or scaled exactly as in Table~\paperref{tab:protocols}.

\begin{table*}[thbp]
\centering
\caption{Per-query PIR \textsf{Fetch} overhead at the four deployed $L{=}128$ operating points ($k{=}10$). ``net'' uses a 10-KB blob and reports bytes / percent of \textsf{Filter} traffic. The last columns estimate local scan latency as a percent of \textsf{Filter} latency for three blob sizes.}
\label{tab:fetch-overhead}
\small
\setlength{\tabcolsep}{6pt}
\begin{tabular}{@{}l r r r r l c c c@{}}
\toprule
& & & \multicolumn{2}{c}{\textsf{Filter}} & & \multicolumn{3}{c}{\textsf{Fetch} latency (\% F)} \\
\cmidrule(lr){4-5}\cmidrule(lr){7-9}
Corpus & $N$ & $K_{50}$ & comm (MB) & lat (ms) & net @ 10\,KB & 1\,KB & 10\,KB & 100\,KB \\
\midrule
NQ & 2{,}681{,}468 & 1{,}952 & 230.61 & 1284.8 & 210.2\,KB (0.09\%) & 0.16\% & 1.56\% & 15.6\% \\
DBpedia & 4{,}635{,}922 & 1{,}207 & 398.70 & 2221.3 & 208.4\,KB (0.05\%) & 0.06\% & 0.56\% & 5.57\% \\
Climate-FEVER & 5{,}416{,}593 & 382 & 465.83 & 2595.4 & 206.3\,KB (0.04\%) & 0.02\% & 0.15\% & 1.51\% \\
Webis-Touch\'e & 382{,}545 & 385 & 32.90 & 183.3 & 206.3\,KB (0.63\%) & 0.22\% & 2.16\% & 21.5\% \\
\bottomrule
\end{tabular}
\end{table*}

\section{Deep Hash Training: Configuration and Design Space}
\label{app:hash-training}
\label{sec:ablation}
 
This appendix expands \S\paperref{sec:hash-arch} with the details for reproducing the hash model training.
 
\subsection{Training Configuration}
\label{app:trainconfig}
Table~\ref{tab:trainconfig} lists the full configuration, shared across all code widths; only $L$ and the epoch budget vary across checkpoints. We adapt \texttt{e5-base-v2} with LoRA on all linear layers and train the linear hash head jointly, under constant learning rates and constant loss weights, with no warmup or learning-rate schedule. Hard negatives are mined per query by BM25 (top-$512$)~\cite{robertson1994bm25} reranked by a MiniLM cross-encoder~\cite{wang2020minilm} into a pool, of which $m{=}3$ participate in each step's backward pass. Mining is cached per query id and refreshed every $4$ epochs with lazy first-touch population, which cuts cross-encoder forwards per epoch by an order of magnitude on a long-tailed query distribution.
 
\begin{table}[t]
\centering
\caption{Training configuration, shared across all code widths.}
\label{tab:trainconfig}
\small
\setlength{\tabcolsep}{6pt}
\resizebox{\columnwidth}{!}{
\begin{tabular}{l l}
\toprule
Encoder & \texttt{e5-base-v2} (768d), mean-pool, $\ell_2$-norm \\
Adaptation & LoRA, all-linear, $r{=}16$, $\alpha{=}16$, dropout $0.05$ \\
Hash head & linear $\to L$, $\tanh(\beta\cdot)$, $\beta:1{\to}6$ \\
Optimizer & AdamW, wd $0.01$, no schedule, no warmup \\
Learning rate & encoder $2\!\times\!10^{-6}$, head $2\!\times\!10^{-4}$ \\
Batch & $128$ queries $\times$ ($1$ pos $+\,3$ hard neg), $300$ steps/epoch \\
Epochs & $16$ (default); $24$ for the longer-training ablation \\
Losses & $\mathcal{L}_{\mathrm{nce}}(T{=}0.05) + 0.8\,\mathcal{L}_{\mathrm{bin}}(\tau_b{=}0.1) + 1.0\,\mathcal{L}_{\mathrm{dist}}$ \\
Disabled & $\lambda_{\mathrm{bal}}{=}\lambda_{\mathrm{quant}}{=}\lambda_{\mathrm{ind}}{=}0$ (App.~\ref{app:regularizers}) \\
Hard negs & BM25-512 $\to$ MiniLM rerank, cache refresh / $4$ ep \\
\bottomrule
\end{tabular}
}
\end{table}
 
\subsection{Soft-to-hard Schedule}
\label{app:annealing}
The sign function that produces the final code is not differentiable, so we train on a smooth surrogate and harden it over the run. The head's logits are squashed by $\tanh(\beta\cdot\mathrm{logit})$ with the inverse-temperature $\beta$ annealed linearly from $1$ to $6$ across training: early on, the soft code is smooth and carries useful gradient everywhere, and late in training, it concentrates near $\pm1$ so the soft-to-hard gap closes. At inference the surrogate is dropped and the code is simply $\sgn(\mathrm{logit})$, which is exactly what the offline indexing step signs to obtain $H\in\bits^{N\times L}$ (\S\paperref{sec:protocol}).
 
\subsection{Regularizers}
\label{app:regularizers}
A deep-hashing objective is conventionally more than a ranking loss. Around the relevance term, the literature accumulates a set of \emph{code-quality regularizers} that push the relaxed (e.g., $\tanh$) outputs toward well-behaved binary codes, and most published systems carry two or three of them at once~\cite{luo2023survey,he2025survey,wang2014hashing}. For example, spectral relaxations of the binary-code objective showed that good codes should be \emph{balanced} (each bit splits the corpus evenly) and \emph{uncorrelated} (distinct bits carry independent information), and where iterative-quantization analyzes showed that the rounding step from real vectors to bits should incur as little \emph{quantization error} as possible~\cite{wang2014hashing}. Recent surveys reorganize the deep-era versions of these around the same three axes: few-bit/compact codes, code balance, and low quantization error~\cite{he2025survey,luo2023survey}. On a soft code $b\in[-1,1]^{L}$ over a batch of $m$ examples, the three classic terms read:
 
\begin{itemize}
\item \emph{Quantization} $\mathcal{L}_{q}=\big\| |b|-\mathbf{1}\big\|_1$, which penalizes logits near zero and drives each soft value toward a confident $\pm1$, shrinking the gap between the soft code optimized at training time and the hard code emitted at inference. This is the most widely used of the three: it appears as an $\ell_1$ or $\ell_2$ penalty in the pairwise-likelihood line~\cite{li2015feature}, is recast as a bimodal-Laplacian prior on the code in DHN~\cite{zhu2016deep}, is sharpened into a Cauchy/margin form to concentrate mass inside small Hamming radii in MMHH and Deep Fisher Hashing~\cite{kang2019maximum,li2019push}, and is the explicit reconstruction objective of the quantization-based family~\cite{liu2018deep,yuan2020central}.
\item \emph{Bit-balance} $\mathcal{L}_{\mathrm{bal}}=\big|\mathbf{1}^{\top} b\big|$, which penalizes a bit that takes the same sign across the batch—a constant bit carries no information and wastes one of the $L$ slots. It descends directly from the balance constraint of spectral hashing and is carried into deep models as an explicit term~\cite{do2016bdnn}, or engineered away by construction—e.g.\ a batch-normalization or bi-half layer that forces an even split without a tunable weight~\cite{hoe2021one}.
\item \emph{Bit-independence} $\mathcal{L}_{\mathrm{ind}}=\big\|b^{\top}b/m - I\big\|$, which suppresses off-diagonal correlations so the code does not spend several bits encoding the same direction. It is the deep analog of the uncorrelated-bit constraint from classical hashing and typically travels together with the balance term~\cite{do2016bdnn,wang2014hashing}; a parallel line sidesteps it by mapping classes to mutually orthogonal target codes (Hadamard or hash-center constructions) so independence holds by design rather than by penalty~\cite{yuan2020central,wang2023deep,shen2018nash}.
\end{itemize}
 
The cost of this machinery is well documented: each term adds a loss weight, and the combined objective is delicate—the extra penalties introduce \textit{more} hyperparameters to tune, and the numerical optimization is prone to poor local minima, which is precisely the motivation behind recent ``single-loss'' designs that fold balance and quantization back into one ranking-style objective~\cite{hoe2021one}.

We take the same lesson to its conclusion and set $\lambda_{q}=\lambda_{\mathrm{bal}}=\lambda_{\mathrm{ind}}=0$, relying on the listwise margin and the teacher anchor alone. The justification is empirical: the active objective already places the codes in the regime those penalties target, potentially due to the extensive pretraining and tuning already inherent to the embedding geometry of the encoder itself~\cite{wang2022e5}.

In particular, a diagnostic at mid-training shows the bits are well balanced and the logits are near-saturated under the bare objective—per-bit entropy $\approx\!0.99$ and mean absolute bit activation $\approx\!0.07$, which is the operating point that an explicit balance or quantization term is meant to reach.
The mechanism is that the listwise margin supplies the saturation pressure for free: ranking the positive above the entire negative pool in the inner-product (Hamming) geometry requires confident, well-separated codes, which pushes logits away from the sign boundary as a side effect, leaving the quantization penalty little to do; and InfoNCE on $\ell_2$-normalized embeddings spreads probability mass across directions \cite{wang2020uniformity}, which discourages the constant or duplicated bits that the balance and independence terms exist to remove. Consistent with this, adding any of the three penalties changes the codes negligibly, while in controlled comparisons at matched float quality, the bare objective yields \emph{more} discriminative codes than with any one of them dialed in, broadly mirroring the move in the hashing literature away from multi-term recipes~\cite{hoe2021one,he2025survey}.

\subsection{Design Space}
\label{app:design-space}
Several axes of the hash model admit alternatives; we summarize the choices and the reasoning, and note that the broader sweep informs but does not gate the protocol.
 
\emph{Encoder adaptation.} We compared full fine-tuning, freezing the encoder, unfreezing only the top layers, and LoRA. Full fine-tuning degrades zero-shot transfer by overwriting the pretrained geometry the rerank relies on; freezing leaves too little plasticity to reshape the space for Hamming retrieval. LoRA sits between the two—enough capacity to specialize the code while keeping the float embedding anchored—and gives the best end-to-end hybrid quality, so it is our default.
 
\emph{Hash head.} A single linear projection from the pooled embedding to $L$ logits is sufficient because the LoRA-adapted encoder already does the representational work; a deeper non-linear head adds parameters without a corresponding quality gain in our setting.
 
\emph{Code length.} Quality and candidate concentration improve smoothly from 96 to 128 bits and then flatten toward 256. Because every bit is a recurring AND-gate cost, we operate at the knee, $L{=}128$, rather than past it; we report $L\in\{96,128,256\}$ to show the trade explicitly (\S\paperref{sec:eval-quality}).
 
\emph{Hard-negative mining.} The two-stage BM25-then-cross-encoder miner supplies the gradient signal for the listwise loss; caching and periodic refresh keep its cost off the per-step critical path. Weaker mining (BM25 alone) measurably softens the binary ranking margin, which is why the cross-encoder stage is retained despite its offline cost.

\ifdefined\appendixonly
\bibliographystyle{ACM-Reference-Format}
\bibliography{references}
\fi
\fi

\end{document}